\newtheorem{theorem}{Theorem}[section]
\newtheorem{proposition}[theorem]{Proposition}
\newtheorem{lemma}[theorem]{Lemma}
\newtheorem{corollary}[theorem]{Corollary}
\newtheorem{observation}[theorem]{Observation} 
\theoremstyle{definition}
\newtheorem{definition}[theorem]{Definition}
\newtheorem{example}[theorem]{Example}   
\crefname{proposition}{Proposition}{Propositions}
\Crefname{proposition}{Prop.}{Props.}
\crefname{theorem}{Theorem}{Theorems}
\Crefname{theorem}{Thm.}{Thms.}
\crefname{corollary}{Corollary}{Corollaries}
\Crefname{corollary}{Cor.}{Cors.}
\crefname{section}{Section}{Sections}
\Crefname{section}{Sec.}{Secs.}
\newcommand{\problemdef}[3]{
		\begin{center}
	\begin{minipage}{0.95\textwidth}
		\noindent
		\textsc{#1}

				\vspace{2pt}
				\setlength{\tabcolsep}{3pt}
				\begin{tabularx}{\textwidth}{@{}lX@{}}
						\textbf{Input:} 		& #2 \\
						\textbf{Question:} 	& #3
					\end{tabularx}
	\end{minipage}
		\end{center}
}
\newcommand{\taskdef}[3]{
		\begin{center}
	\begin{minipage}{0.95\textwidth}
		\noindent
		\textsc{#1}

				\vspace{2pt}
				\setlength{\tabcolsep}{3pt}
				\begin{tabularx}{\textwidth}{@{}lX@{}}
						\textbf{Input:} 		& #2 \\
						\textbf{Task:} 	& #3
					\end{tabularx}
	\end{minipage}
		\end{center}
}
\author[1]{Matthias Bentert}
\author[1]{Till Fluschnik\thanks{Supported by the DFG, project DAMM (NI~369/13-2).}${}^{,}$}
\author[1]{André Nichterlein\thanks{Partially supported by a postdoc fellowship of the DAAD while at Durham University, UK.}${}^{,}$}
\author[1]{Rolf~Niedermeier}
\affil[1]{\small Algorithmics and Computational Complexity, Faculty~IV, TU Berlin, Germany, \texttt{\{firstname.lastname\}@tu-berlin.de}}
\tikzstyle{para}=[rectangle,draw=black,minimum height=.8cm,fill=gray!10!white,rounded corners=1mm, on grid] 
\tikzstyle{para2}=[rectangle,draw=black,minimum height=.8cm,fill=gray!10,rounded corners=1mm, on grid, scale=0.7]
\tikzstyle{wnode}=[circle, draw, scale=0.7,fill=white]
\tikzstyle{gnode}=[circle, draw, scale=0.7,fill=lightgray]
\tikzstyle{bnode}=[circle, draw, scale=0.7,fill=black]   
\title{Parameterized Aspects of Triangle Enumeration \footnote{An extended abstract appeared at FCT 2017.}}
\date{\vspace{-1cm}}
\newcommand{\N}{\mathds{N}}
\newcommand{\sol}{\ensuremath{\mathtt{Sol}}}
\newcommand{\hashT}{\operatorname{\#T}}
\newcommand{\hashs}{\operatorname{\#s}}
\newcommand{\hashS}{\operatorname{\#S}}
\newcommand{\NP}{\mathsf{NP}}
\newcommand{\condRef}[2]{{\hyperref[#2]{(#1\ref{#2})}}}
\newcommand*{\defeq}{\mathrel{\vcenter{\baselineskip0.5ex \lineskiplimit0pt
                     \hbox{\scriptsize.}\hbox{\scriptsize.}}}%
                     =}
\newcommand{\trenum}{\textsc{$\triangle$\nobreakdash-Enum}}
\newcommand{\tridet}{\textsc{$\triangle$\nobreakdash-Detect}}
\begin{document}
\title{Parameterized Aspects of Triangle Enumeration\footnote{An extended abstract appeared in the proceedings of \emph{the 21st International Symposium on Fundamentals of Computation Theory~(FCT 2017)} held in Bordeaux, France, September 11--13, 2017, volume 10472 of LNCS, pages 96–110. Springer, 2017.
Several results of the paper were derived in the first authors master thesis~\cite{Bentert16}.}}

\maketitle

\begin{abstract}
The task of listing all triangles in an undirected graph is a fundamental graph
primitive with numerous applications.
It is trivially solvable in time cubic in the number of vertices. 
It has seen a significant body of work contributing to both theoretical aspects (e.g., lower and upper bounds on running time, adaption to new computational models) as well as practical aspects (e.g.\ algorithms tuned for large graphs).
Motivated by the fact that the worst-case running time is cubic, we perform a systematic parameterized complexity study of triangle enumeration.
We provide both positive results (new enumerative kernelizations, ``subcubic'' parameterized solving algorithms) as well as negative results (presumable uselessness in terms of ``faster'' parameterized algorithms of certain parameters such as graph diameter).
To this end, we introduce new and extend previous concepts.
\end{abstract}

\section{Introduction}
Detecting, counting, and enumerating triangles in undirected graphs is a basic graph primitive.
In an~$n$-vertex graph, there can be up to~$n\choose 3$ different triangles and an algorithm checking for each three-vertex subset whether it forms a triangle can list all triangles in~$O({n\choose 3})$ time.
As to counting the number of triangles in a graph, the best known algorithm takes~$O(n^\omega) \subset O(n^{2.373})$ time~\cite{DBLP:journals/tcs/Latapy08} and is based on fast matrix multiplication.\footnote{$\omega$ is a placeholder for the best known $n\times n$-matrix multiplication exponent.}
Consequently, detecting a triangle in a graph is doable in~$O(n^\omega)$ time~\cite{DBLP:journals/tcs/Latapy08} and it is conjectured that every algorithm for detecting a triangle in a graph takes at least~$\Omega(n^{\omega-o(1)})$ time~\cite{DBLP:conf/focs/AbboudW14}.
We mention that for $m$-edge graphs there is also an~$O(m^{1.5})$-time algorithm \cite{DBLP:journals/siamcomp/ItaiR78} which is interesting in case of sparse graphs.
Our work is motivated by trying to break such (relative or conjectured) lower bounds and improve on best known upper bounds---the twist is 
to introduce
a secondary measurement beyond mere input size. 
This is also known as problem parameterization. 
While parameterizing problems with the goal 
to achieve fixed-parameter tractability results  is 
a well-established line 
of research for $\NP$-hard problems, \emph{systematically} applying and extending 
tools and concepts 
from parameterized algorithmics to 
polynomial-time solvable problems is still in its 
infancy~\cite{BDKNN,DBLP:conf/soda/AbboudWW16,Husfeldt16,FKMNNT2017,GMN17,MertziosNN17,CoudertDP18,FominLSPW18}. 
Performing a closer study of mostly 
triangle enumeration, we contribute to this line of research, also 
referred to as ``FPT~in~P'' for short~\cite{GMN17}. 
Our central leitmotif herein is the quest for parameterized subcubic triangle 
enumeration algorithms.\footnote{Note that there is also the concept of P-FPT where the running time may only have a polynomial dependency on the parameter. We use the more lenient concept of FPT~in~P since there are known cases where an exponential dependency on the parameter is ``needed'' for a linear dependency on the input size. }

\paragraph{Related work}
Triangle enumeration, together with its relatives counting and detection, has many applications, ranging from spam detection~\cite{BecchettiBCG10} over complex network analysis~\cite{GGKT15,DBLP:journals/siamrev/Newman03} and database applications \cite{DBLP:journals/tods/KhamisNRR16} to applications in bioinformatics~\cite{DBLP:conf/icde/ZhangP12}. 
We refer to Kolountzakis et.\,al.\;\cite{DBLP:journals/im/KolountzakisMPT12} for an extended list of applications with small explanations why the respective triangle detection problem is relevant in each application.

The significance of the problem triggered substantial theoretical and practical work.
The theoretically fastest algorithms are based on matrix multiplication and run in~$O(n^\omega + n^{3(\omega-1)/(5-\omega)} \cdot \allowbreak \hashT^{2(3-\omega)/(5-\omega)})$ time, where~$\hashT$ denotes the number of listed triangles~\cite{DBLP:conf/icalp/BjorklundPWZ14}.
Furthermore, there is work (including heuristics and experiments) on listing triangles in large graphs~\cite{DBLP:journals/datamine/LagraaS16,DBLP:conf/wea/SchankW05}, on triangle enumeration in the context of map reduce in graph streams~\cite{DBLP:conf/cikm/ParkSKP14,DBLP:conf/soda/Bar-YossefKS02,DBLP:journals/algorithmica/BulteauFKP16}, and even on quantum algorithms for triangle detection~\cite{DBLP:journals/algorithmica/LeeMS17}. 
For a broader overview, we refer to a survey by Latapy~\cite{DBLP:journals/tcs/Latapy08}.

As to parameterized results, early work by Chiba and Nishizeki~\cite{DBLP:journals/siamcomp/ChibaN85} showed that all triangles in a graph can be counted in~$O(m\cdot d)$ time, where~$d$ is the degeneracy of the graph.\footnote{A graph~$G$ is $d$-degenerate if every subgraph of~$G$ contains a vertex of degree at most~$d$. The degeneracy of~$G$ is the smallest~$d$ such that~$G$ is~$d$-degenerate.
Thus $G$ contains at most~$n\cdot d$ edges. 
Indeed, Chiba and Nishizeki~\cite{DBLP:journals/siamcomp/ChibaN85} used the parameter ``arboricity'' of a graph, which is at most two times its degeneracy. Moreover, their algorithm can be modified to list all triangles without any substantial overhead.}
This running time can be improved by saving polylogarithmic factors if the number of triangles is not cubic in the number of vertices in the input graph~\cite{DBLP:conf/wads/KopelowitzPP15}, but the 3SUM-conjecture\footnote{The 3SUM problem asks whether a given set~$S$ of~$n$ integers contains three integers $a, b, c \in S$ summing up to 0. The 3SUM-conjecture~\cite{GajentaanO95} states that for any constant~$\varepsilon > 0$ there is no~$O(n^{2-\varepsilon})$-time algorithm solving 3SUM. The connection between 3SUM and listing/detecting triangles is well studied~\cite{DBLP:journals/algorithmica/LeeMS17,DBLP:conf/stoc/Patrascu10}.}
rules out more substantial improvements~\cite{DBLP:conf/soda/KopelowitzPP16}.
Green and Bader~\cite{DBLP:conf/socialcom/GreenB13} described an algorithm for triangle counting running in~$O(T_K + |K| \cdot \Delta_K^2)$~time, where~$K$ is a vertex cover of the input graph, $\Delta_K$ is the maximum degree of vertices in $K$ (with respect to the input graph), and $T_K$ is the time needed to compute $K$. 
They also described several experimental results.
Recently, \citet{CoudertDP18} proved that both, detecting and counting triangles in a given graph, can be solved in~$O(k^2\cdot(n+m))$ time, where~$k$ is (amongst others) the clique-width of the input graph.
Finally, Eppstein and Spiro \cite{EppsteinS12} described a data structure that keeps track of the number of triangles in an undirected graph that changes over time (in each step a new edge can be inserted or an existing one can be deleted). They described the time complexity of updating their data structure in terms of the~$h$-index of the current graph.\footnote{The~$h$-index of a graph is the maximum number~$h$ such that the graph contains at least~$h$ vertices of degree at least~$h$.}

Complementing algorithmic results, the problems of detecting and counting triangles have been studied from a running-time-lower-bounds perspective, where the lower bounds are based on popular conjectures like, for instance, the Strong Exponential-Time Hypothesis~\cite{IPZ01,IP01}.
Accordingly, lower bounds have been proven for detecting (in a given edge-weighted graph) a triangle of negative weight~\cite{WilliamsW18} or counting (in a given vertex-colored graph) triangles with pairwise different colors on the vertices~\cite{AbboudWY18}.
Recently, making use of the above mentioned results~\cite{WilliamsW18,AbboudWY18}, lower bounds on \emph{strict} kernelization, that is kernelization where the parameter in the kernel is not allowed to increase, have been proven for both problems when parameterized by, for instance, the maximum degree or the degeneracy of the given graph~\cite{FluschnikMN18};
Additionally, for both problems with the same parameterization, a strict kernel of cubic size computable in~$O(n^{5/3})$~time has been shown.

\paragraph{Our contributions}
We systematically explore the parameter space for triangle enumeration and classify the usefulness of the parameters for FPT-in-P algorithms.
In doing so, we extend a concept of enumerative kernelization given by Creignou et al.~\cite{DBLP:conf/mfcs/CreignouMMSV13} and present a novel hardness concept, as well as algorithmic results. 
Our concrete results are surveyed in \cref{fig:resulttab}. We refer to the respective sections for motivation and a formal definition of the various parameters.
\begin{table}[t!]
\caption{Overview of our results ($n$: number of vertices; $m$: number of edges; $\hashT{}$:~number of triangles; $k$: respective parameter; $\Delta$: maximum degree).}
\label{fig:resulttab}
	\setlength{\tabcolsep}{10pt}
	\renewcommand{\arraystretch}{1.15}
	\begin{tabularx}{\textwidth}{llll}
		\toprule
		& parameter~$k$ & result & reference\\
		\midrule
		\multirow{4}{*}{\rotatebox[origin=c]{90}{enum-kernel}}
				& \multirow{2}{*}{feedback edge number} & \multirow{2}{*}{\LARGE{\{}} size at most 9$k$ & \multirow{2}{*}{\Cref{fesnkernel}}\\
				& &\ \ \ \ in~$O(n+m)$ time  & \\
				& \multirow{2}{*}{distance to~$d$-degenerate} & \multirow{2}{*}{\LARGE{\{}} at most~$k+2^{k}+3$ vertices & \multirow{2}{*}{\Cref{thm:dtddkernel}}\\
				& &\ \ \ \ in~$O(n \cdot d \cdot (k + 2^{k}))$ time  & \\
		\midrule
		\multirow{8}{*}{\rotatebox[origin=c]{90}{solving}}
				& distance to~$d$-degenerate & \multirow{2}{*}{$O(k\cdot \Delta^2 + n \cdot d^2)$} & \multirow{2}{*}{\Cref{Bader}} \vspace{-2.5pt}\\
				& \hfill + maximum degree~$\Delta$ &  & \\
				& feedback edge number & $O(k^2+n+m)$ & \Cref{cor:fesnalg}\\
				& distance to~$d$-degenerate & $O(n \cdot d \cdot (k+d) + 2^{3k} + \hashT)$& \,\,\,\Cref{cor:dtddalg}\\
				& distance to bipartite &$O(\hashT{} + n + m \cdot k)$& \Cref{thm:distbipchorcog}\ref{thm:distbip}\\
				& distance to chordal &$O(n + m \cdot k)$& \Cref{thm:distbipchorcog}\ref{thm:distchor}\\
				& distance to cographs &$O(\hashT{} + n + m \cdot k)$& \Cref{thm:distbipchorcog}\ref{thm:distcog}\\
				& clique-width &$O(n^2 + n \cdot k^2 + \hashT{})$ & \Cref{thm:cliquewidth}\\
		\midrule
		\multirow{3}{*}{\rotatebox[origin=c]{90}{hardness}}
				& domination number, & for~$k \geq 3$ & \multirow{3}{*}{\Cref{chromnum}}\vspace{-0pt}\\
				& chromatic number, and & as hard as & \vspace{-2.5pt}\\
				& diameter & the general case & \\
		\bottomrule
	\end{tabularx}
\end{table}

In particular, we provide \emph{enumerative problem kernels} with respect to the parameters ``feedback edge number'' and ``distance to $d$-degenerate graphs''.
Partially based on data reduction algorithms, we provide fast algorithms for several parameters such as feedback edge number, (vertex-deletion) distance to bipartite graphs, chordal graphs, cographs and to $d$-degeneracy (the last one with and without the additional parameter maximum vertex degree), and clique-width.
On the negative side, using a concept we call ``\emph{General-Problem-hardness}'', we show that using the parameters domination number, chromatic number, and diameter do not help to get FPT-in-P algorithms for detecting triangles, that is, even for constant parameter values the problem remains as ``hard'' as the general version with unbounded parameter.

\paragraph{Organization}
The remainder of this work is organized as follows. In \cref{sec:Prelims} we fix some notation and explain basic concepts. Then, in \cref{sec:concepts}, we 
introduce a new notion of hardness (GP-hardness) and of kernelization (enum-advice kernelization), give a small example for each of them, and state our main negative result. In \cref{sec:algs} we provide the positive results of our work. We conclude in \cref{sec:conclusion} and state some directions for further research.
\section{Preliminaries}
\label{sec:Prelims}
\paragraph{Notation}
For an integer~$\ell \ge 1$, let~$[\ell] = \{1,\ldots,\ell\}$.
Let~$G = (V, E)$ be an undirected simple graph. 
We also denote by~$V(G)$ and $E(G)$ the vertex set and the edge set of~$G$, respectively.
We set~$n \defeq |V|$,~$m \defeq |E|$, and~$|G| \defeq n + m$. 

We denote by~$N(v)$ the (open) neighborhood of a vertex~$v\in V$ and by~${\operatorname{deg}(v) \defeq |N(v)|}$ the degree of~$v$. 
By~$G[U]$ we denote the subgraph of~$G$ induced by the vertex subset~$U \subseteq V$ and~$G - U \defeq G[V \setminus U]$.
A vertex subset~$U\subseteq V$ is a vertex cover of~$G$ if~$G-U$ contains no edges.
If~$\{x,y,z\}\subseteq V$ induces a triangle in a graph, we refer to~$T=\{x,y,z\}$ as this triangle.
We denote the number of triangles in the graph by~$\hashT$. 
Our central problem is as follows.
\taskdef{Triangle Enumeration (\trenum)}
        {An undirected graph~$G$.}
        {List all triangles contained in~$G$.}
\paragraph{Parameterized Complexity}
A language~$L\subseteq \Sigma^* \times \N$ is a \emph{parameterized problem} over some finite alphabet~$\Sigma$, where~$(x,k) \in \Sigma^* \times \N$ denotes an instance of~$L$ and~$k$ is the parameter.
For a parameterized problem~$L$, the language~$\hat{L}=\{x\in \Sigma^*\mid \exists k\colon (x,k)\in L\}$ is called the \emph{unparameterized problem} associated with~$L$.
Then~$L$~is called \emph{fixed-parameter tractable} (equivalently, $L$~is in the class FPT) if there is an algorithm that on input~$(x,k)$ decides whether~$(x,k)\in L$ in~$f(k)\cdot |x|^{O(1)}$ time, where~$f$ is some computable function only depending on~$k$ and~$|x|$ denotes the size of~$x$.
We call an algorithm with a running time of the form~$f(k)\cdot |x|$ a linear-time FPT algorithm.
Creignou et al.~\cite[Definition 3.2]{DBLP:conf/mfcs/CreignouMMSV13} introduced the concept of FPT-delay algorithms for enumeration problems. 
An algorithm~$\mathcal{A}$ is an \emph{FPT-delay algorith}m if there exist a computable function $f$ and a polynomial $p$ such that $\mathcal{A}$~outputs for every input~$x$ all solutions of~$x$ with at most~$f(k) \cdot p(|x|)$ time between two successive solutions.
If the delay can be upper-bounded in~$p(|x|)$, then the algorithm is called a~\emph{$p$-delay algorithm}.
A \emph{kernelization} for~$L$ is an algorithm that on input~$(x,k)$ computes in time polynomial in~$|x|+k$ an output~$(x',k')$ (the \emph{kernel}) such that

\begin{enumerate}[(i)]
	\item $(x,k)\in L \iff (x',k')\in L$, and
	\item $|x'|+k'\leq g(k)$ for some computable function~$g$ only depending on~$k$.
\end{enumerate}

The value~$g(k)$ denotes the \emph{size of the kernel}.
It is well-known that a decidable parameterized problem~$P$ is in FPT if and only if~$L$~admits a kernelization.

We can associate a graph parameter~$k$ with a function~$\kappa\colon\mathcal{G}\to \N\cup\{\infty\}$ that maps every graph~$G$ to its parameter value, where~$\mathcal{G}$ denotes the family of all graphs.
We say that~$k$ \emph{lower-bounds} a parameter~$k'$, associated with function~$\kappa'$, if there is a function~$f\colon\N\to\N$ such that for every graph~$G\in\mathcal{G}$ it holds that~$\kappa(G)\leq f(\kappa'(G))$ (respectively, we say~$k'$ \emph{upper-bounds}~$k$).
If for two parameters it holds that none of them lower-bounds the other, we say that the two parameters are \emph{unrelated}.

Our work focuses on enumeration, while the great majority of parameterized complexity works study decision (or search and optimization) problems.

\begin{definition}[{\cite[Definition~1]{DBLP:conf/mfcs/CreignouMMSV13}}]
	A \emph{parameterized enumeration problem} is a pair~$(P,\sol)$ such that 
	\begin{itemize}
		\item $P \subseteq \Sigma^* \times \N$ is a parameterized problem over a finite alphabet~$\Sigma$ and
		\item $\sol\colon \Sigma^*\times \N \rightarrow \mathcal{P}(\Sigma^*)$ is a function such that for all~$(x,k) \in \Sigma^* \times \N$, $\sol(x,k)$ is a finite set and~$\sol(x,k) \ne \emptyset \iff (x,k) \in P$. 
	\end{itemize}
\end{definition}
Intuitively, the function~$\sol$ contains for each instance~$(x,k)$ of~$P$ the set of all solutions. Given an instance~$(x,k)$, the task is then to compute~$\sol(x,k)$.

\section{New Notions of Hardness and Kernelization}
\label{sec:concepts}
In this section we introduce two new notions and demonstrate their usefulness. 
The first notion is a many-one reduction that relates a parameterized problem to its unparameterized counterpart.
We call it \emph{``General-Problem-hardness''} as it proves the parameterized version to be as hard as the unparameterized (general) problem.
The second concept is an adaption of an existing kernelization concept for enumeration problems.
It uses some additional space in order to avoid encoding everything in the kernel instance.

\subsection{Computational Hardness} \label{sec:hardness} 
We show hardness for the following problem.
\problemdef{Triangle Detection (\tridet{})}
	{An undirected graph~$G$.}
	{Does~$G$ contain a triangle?}
Since \tridet{} is a special case of \trenum{}, it follows that any lower bound for \tridet{} implies the same lower bound for \trenum{}. Thus, if a certain parameter does not admit a solving algorithm for \tridet{} in some (parameterized) time~$X$, then \trenum{} does not either.

Before giving a formal definition of our concept, consider as an introductory example the parameter minimum degree. 
Adding an isolated vertex to any graph in constant time leaves the set of triangles unchanged and the resulting graph has minimum degree zero.
Hence, one cannot use the parameter minimum degree to design faster algorithms for \tridet{}. 
Upon this trivial example, we study which parameters for \tridet{} cannot be used to design linear-time FPT algorithms under the hypothesis that \tridet{} is not linear-time solvable~\cite{DBLP:conf/focs/AbboudW14}.
To this end we reduce an arbitrary instance of \tridet{} 
\begin{inparaenum}[(i)]
  \item in linear time to
  \item a new equivalent instance of some parameterized version of the problem such that
  \item the parameter is upper-bounded by a constant.
\end{inparaenum}
The corresponding notion of a many-one reduction (also for non-linear running times) is as follows.

\begin{definition} 
	\label{def:k-GP-hard}
	Let~$P \subseteq \Sigma^* \times \N$ be a parameterized problem, let~$Q \subseteq \Sigma^*$ be the unparameterized problem associated with~$P$, and let~$f\colon \N \rightarrow \N$ be a polynomial function.
	We call~$P$ \emph{$\ell$-General-Problem-hard$(f)$ ($\ell$-GP-hard$(f)$)} if there exists an algorithm~$\cal{A}$ transforming any input instance~$x$ of~$Q$ into an instance~$(x',k')$ of~$P$ such that
	\begin{enumerate}[\hspace{3pt}({G}1)]
		\item $\cal{A}$ runs in~$O(f(|x|))$ time,\label{prop:GPh-time}
		\item $x \in Q \iff (x',k') \in P$,\label{prop:GPh-equiv}
		\item $k' \leq \ell$, and\label{prop:GPh-param}
		\item $|x'| \in O(|x|)$.\label{prop:GPh-size}
	\end{enumerate}
	We call~$P$ \emph{General-Problem-hard$(f)$ (GP-hard$(f)$)} if there exists an integer~$k$ such that~$P$ is~$k$-GP-hard$(f)$. 
	We omit the running time and call~$P$ \emph{$k$-General-Problem-hard ($k$-GP-hard)} if~$f$ is a linear function.
\end{definition}
%

Let~$P$ be some parametrized problem being $\ell$-GP-hard($f$) for some polynomial~$f$.
Suppose that we can exclude an algorithm solving~$Q$, the unparameterized version of~$P$, in~$O(f(|x|))$ time under some assumption~$A$.
Since~$P$ is~$\ell$-GP-hard($f$), there is an algorithm computing for any instance~$x$ of~$Q$ an equivalent instance~$(x',k')$ of~$P$ in~$O(f(|x|))$-time.
Then, under assumption~$A$, we can exclude the existence of a~$g(k) \cdot f(|x|)$-time algorithm for~$P$ for any computable function~$g$ due to the following.
Since $k'\leq \ell$, with~$\ell$ being a constant, we have~$g(k)\in O(1)$.
Since~$|x'|\in O(|x|)$ and~$f$ is some polynomial, we have~$f(|x'|)) \in O(f(|x|))$.
Altogether, the two algorithms provide an~$O(f(|x|))$-time algorithm, breaking assumption~$A$.

In a nutshell, a parameterized problem~$P$ being GP-hard is (unconditionally) at least as hard to solve as its unparameterized problem associated with~$P$, formally:


\begin{lemma}
	\label{lem:gph}
	Let~$f\colon \N \rightarrow \N$ be a function, let~$P \subseteq \Sigma^* \times \N$ be a parameterized problem that is~$\ell$-GP-hard$(f)$, and let~$Q \subseteq \Sigma^*$ be the unparameterized problem associated with~$P$.
	If there is an algorithm solving each instance~$(x, k)$ of~$P$ in~$g(k) \cdot f(|x|)$ time, then there is an algorithm solving each instance~$x'$ of~$Q$ in~$O(f(|x'|))$ time. 
\end{lemma}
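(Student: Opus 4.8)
The plan is to prove the contrapositive-flavored implication directly by composing the GP-hardness transformation with the hypothesized algorithm for~$P$. Suppose we are given an algorithm~$\mathcal{B}$ that solves each instance~$(x,k)$ of~$P$ in~$g(k)\cdot f(|x|)$ time, and let~$\mathcal{A}$ be the transformation algorithm guaranteed by the assumption that~$P$ is~$\ell$-GP-hard$(f)$. To solve an arbitrary instance~$x'$ of~$Q$, I would first run~$\mathcal{A}$ on~$x'$ to obtain an instance~$(y,k')$ of~$P$, and then run~$\mathcal{B}$ on~$(y,k')$ and return its answer.

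The correctness of this composed algorithm follows immediately from property~\ref{prop:GPh-equiv}: since~$x'\in Q \iff (y,k')\in P$, the output of~$\mathcal{B}$ on~$(y,k')$ correctly decides whether~$x'\in Q$. The main work is the running-time bound, and I would carry it out by bounding the two phases separately and then combining them. Phase one runs in~$O(f(|x'|))$ time by~\ref{prop:GPh-time}. For phase two, I would bound~$g(k')\cdot f(|y|)$ as follows. By~\ref{prop:GPh-param} we have~$k'\leq \ell$, and since~$\ell$ is a fixed constant, $g(k')$ is bounded by the constant~$\max_{0\le j\le \ell} g(j)$, hence~$g(k')\in O(1)$. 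By~\ref{prop:GPh-size} we have~$|y|\in O(|x'|)$, and since~$f$ is a polynomial (and in particular nondecreasing up to constant factors on the relevant range), this yields~$f(|y|)\in O(f(|x'|))$. Multiplying these two bounds gives a phase-two running time of~$O(f(|x'|))$.

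Summing the two phases, the total running time is~$O(f(|x'|)) + O(f(|x'|)) = O(f(|x'|))$, which is exactly the claimed bound, so the composed algorithm solves~$Q$ in the desired time.

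I expect the only genuinely delicate step to be justifying~$f(|y|)\in O(f(|x'|))$ from~$|y|\in O(|x'|)$. This requires that~$f$ be a polynomial so that composing it with a linear blow-up in the argument only changes the bound by a constant factor; the definition indeed stipulates that~$f$ is a polynomial function, so this is legitimate, but it is the one place where the polynomiality of~$f$ is essential rather than cosmetic. Everything else is a routine composition argument, and the discussion immediately preceding the statement already sketches this reasoning informally.
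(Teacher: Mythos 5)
Your proof is correct and follows essentially the same route as the paper's: compose the GP-hardness transformation $\mathcal{A}$ with the hypothesized algorithm $\mathcal{B}$, use the equivalence property for correctness, and combine the bounds $k'\leq\ell$ and $|x'|\in O(|x|)$ to absorb both phases into $O(f(|x'|))$. Your added remark that polynomiality (or at least monotonicity up to constants) of $f$ is needed to pass from $|y|\in O(|x'|)$ to $f(|y|)\in O(f(|x'|))$ is a point the paper's proof glosses over, and it is indeed supplied by the definition of GP-hardness.
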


\begin{proof}
	Assume that there is an algorithm~$\cal{B}$ which solves each instance~$(x, k)$ of~$P$ in ${O(g(k) \cdot f(|x|))}$~time. 
	Let~$x_Q$ be an arbitrary instance of~$Q$. 
	Since~$P$ is~$\ell$-GP-hard$(f)$, there is an algorithm~$\cal{A}$ which transforms~$x$ into a new instance~$(x', k')$ of~$P$ in~$O(f(|x|))$ time such that~$k' \leq \ell$,~$|x'| \in O(|x|)$, and~${x \in Q}$ if and only if~$(x',k') \in P$.
	
	By assumption, algorithm~$\cal{B}$ solves~$(x', k')$ in~\mbox{$g(k') \cdot f(|x'|)$} time.	 
	Since $k' \leq \ell$ and~$|x'| \in O(|x|)$, it holds that~$g(k') \cdot f(|x'|) \in O(f(|x|))$. 
	Since~$x \in Q$ if and only if~$(x',k') \in P$ holds, this algorithm solves~$Q$ in~$O(f(|x|))$ time.
\end{proof}

It is folklore that \tridet{} in tripartite graphs is as hard as the general case (see, e.g., \cite[Section~5]{DBLP:conf/icalp/BjorklundPWZ14}).
We show that \tridet{} with respect to each of the parameters domination number, chromatic number, and diameter is $3$-GP-hard.
Indeed, we also show that \tridet{} is~$9$-GP-hard for the sum of the three parameters.
The \emph{domination number} of a graph is the size of a minimum cardinality set~$S$ with~$\bigcup_{v\in S} N(v) \cup S = V$.
The \emph{chromatic number} of a graph is the minimum number of colors needed to color the vertices such that no edge contains vertices of the same color.
The \emph{diameter} of a graph is the length of the longest shortest path between two vertices.

\begin{proposition}
	\label{chromnum}
	\tridet{} is~$3$-GP-hard with respect to each domination number, chromatic number, and diameter.
	Moreover, \tridet{} is~$9$-GP-hard with respect to the sum of domination number, chromatic number, and diameter.
\end{proposition}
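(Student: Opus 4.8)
The plan is to exhibit, for each parameter, a linear-time transformation from an arbitrary instance $G$ of \tridet{} to an equivalent instance whose parameter value is bounded by the promised constant, thereby verifying properties \condRef{G}{prop:GPh-time}--\condRef{G}{prop:GPh-size} of \cref{def:k-GP-hard}. The natural vehicle is the folklore fact that \tridet{} on tripartite graphs is as hard as the general case: given $G=(V,E)$, I would form three disjoint copies $V_1,V_2,V_3$ of $V$ and, for every edge $\{u,v\}\in E$, insert the six edges connecting the copies of $u$ and $v$ across distinct parts. A triangle $\{x,y,z\}$ in $G$ then corresponds to a rainbow triangle (one vertex in each part) in the tripartite graph, and conversely any triangle in the tripartite graph must use all three parts and hence projects to a triangle in $G$. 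This construction runs in $O(n+m)$ time, increases the instance size only by a constant factor, and immediately gives chromatic number at most $3$, settling that parameter.

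First I would handle the chromatic number, which is the cleanest: the tripartite graph is properly $3$-colorable by assigning color $i$ to all of $V_i$, so its chromatic number is at most $3$, and this already establishes $3$-GP-hardness for that parameter. Next I would address the diameter. The tripartite graph as constructed may be disconnected or have large diameter, so I would augment it with a gadget that forces small diameter without creating new triangles. The standard trick is to add a single apex vertex $w$ adjacent to every vertex of $V_1\cup V_2\cup V_3$; this makes the diameter at most $2$. The subtlety is that $w$ together with two adjacent vertices could form a spurious triangle, so I would instead need a triangle-free way to shrink distances — for instance attaching pendant structure or using two apex vertices that are nonadjacent to each other and placed so as to avoid closing any triangle. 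The domination number is handled similarly: I want a constant-size dominating set, which the apex-type construction also provides, again taking care that the dominators do not introduce triangles.

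The main obstacle, and the step I expect to require the most care, is engineering the auxiliary gadget so that it simultaneously controls diameter (and domination number) \emph{without} altering the set of triangles; adding high-degree vertices is the obvious way to dominate and to shorten paths, but such vertices are exactly the ones prone to completing unwanted triangles. The resolution is to ensure every added vertex sees only an independent set, or to subdivide the connections, so that no added vertex lies on a triangle. For the combined bound I would then argue that a single well-chosen gadget can simultaneously witness all three parameters: the $3$-colorability is preserved by coloring the gadget vertices with the existing or a fourth/fifth/sixth color as needed, and the diameter and domination bounds are additive, yielding the sum bound of $9$. I would finally check that the whole transformation still runs in linear time and blows up the size by only a constant factor, so that all of \condRef{G}{prop:GPh-time}--\condRef{G}{prop:GPh-size} hold, and invoke \cref{lem:gph} to conclude that no $g(k)\cdot(n+m)$ algorithm exists for these parameterizations unless \tridet{} is solvable in linear time.
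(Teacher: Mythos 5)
Your overall strategy coincides with the paper's: blow $G$ up into a tripartite graph on three copies $V_1,V_2,V_3$ of $V$ (which preserves the existence of triangles and immediately bounds the chromatic number by $3$), and then attach a constant-size, triangle-free gadget that forces small domination number and diameter. The chromatic-number part of your argument is therefore fine. The genuine gap is exactly where you flag it: the gadget is never constructed, and the candidates you float would fail. A single apex creates triangles, as you observe; but so do \emph{two} mutually nonadjacent apexes, because any vertex adjacent to all of $V_1\cup V_2\cup V_3$ contains cross-part edges of the tripartite graph inside its neighborhood, and a pendant structure cannot dominate $3n$ vertices with constantly many dominators. Your stated principle --- every added vertex must see only an independent set --- forces each dominator to attach to a single part $V_i$, so at least three dominators are needed; but making them pairwise adjacent creates a triangle among the dominators themselves, while leaving them pairwise nonadjacent leaves the diameter unbounded (for an edgeless $G$ the three parts would then lie in different connected components). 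So one further idea is required, and it is the heart of the paper's proof.

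The paper's resolution uses \emph{six} gadget vertices $\ell_1,\ell_2,\ell_3,r_1,r_2,r_3$: both $\ell_i$ and $r_i$ are joined to all of $V_i$, and $\ell_i$ is joined to $r_j$ exactly when $i\neq j$. The $L$--$R$ edges form a $6$-cycle, hence are triangle-free; every gadget vertex still sees only an independent set; $L=\{\ell_1,\ell_2,\ell_3\}$ is a dominating set of size $3$; and every vertex of $V_i$ reaches every vertex of $V_j$ via $\ell_i$ and $r_j$, giving diameter at most $3$ (not $2$, as your apex sketch would have given). Note also that your remark about coloring the gadget ``with a fourth/fifth/sixth color as needed'' would undermine the sum bound of $9$: for the combined statement all three parameters must simultaneously be at most $3$ in a \emph{single} construction, so the gadget must reuse the three existing colors (the paper colors $\ell_i,r_i$ with color $1+(i\bmod 3)$, which is valid since $N(\ell_i)\cup N(r_i)\subseteq V_i\cup L\cup R$ avoids that color). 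Until a concrete gadget with all of these properties is exhibited and verified, the claims for domination number, diameter, and the sum remain unproven.
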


\begin{proof}
	Let~$G=(V,E)$ be an instance of \textsc{Triangle Detection}.
	Let~$V=\{v_1, \ldots, v_n\}$.
	We construct a graph~${G'=(V',E')}$ in time linear in the size of~$G$ such that~$G$ contains a triangle if and only if~$G'$ contains a triangle.
	Moreover,~$G'$ has domination number, chromatic number, and diameter at most three.

	We refer to~\cref{tcolor} for an illustrative example of the following construction of~$G'$.
	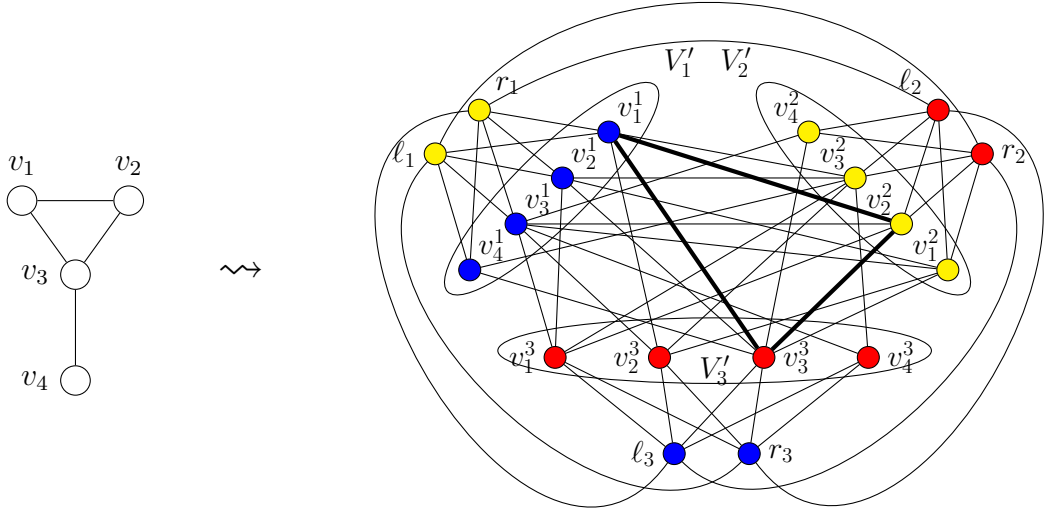
\begin{figure}[t!]
	\centering
	\begin{subfigure}{0.1\textwidth}\centering
		\begin{tikzpicture}[scale=0.8]
		\node[circle, draw, label=$v_1$] (v1) {};
		\node[circle, draw, label=$v_2$] (v2) [right= of v1] {};
		\draw (v1) -- (v2);
		\node[circle, draw, label=left:$v_3$, xshift=20pt, yshift=12pt] (v3) [below= of v1] {}
		edge (v1)
		edge (v2);
		\node[circle, draw, label=left:$v_4$] (v4) [below= of v3] {}
		edge (v3);
		\end{tikzpicture}
	\end{subfigure}
	\hspace*{1cm}
	\Large{$\leadsto$}
	\begin{subfigure}{0.7\textwidth}\centering
		\begin{tikzpicture}[node distance=0.4cm and 0.4cm, every node/.style={scale=0.55}]
		\node[circle, draw, label={[label distance=-7pt]above right:{\LARGE$v_1^1$}}, fill=blue] (V11) {};
		\node[circle, draw, label={[label distance=-7pt]above right:\LARGE{$v_2^1$}}, fill=blue] (V12) [below left= of V11] {};
		\node[circle, draw, label={[label distance=-7pt]above right:\LARGE{$v_3^1$}}, fill=blue] (V13) [below left= of V12] {};
		\node[circle, draw, label={[label distance=-7pt]above right:\LARGE{$v_4^1$}}, fill=blue] (V14) [below left= of V13] {};
		\node[circle, xshift=100pt, draw, label={[label distance=-7pt]above left:\LARGE{$v_4^2$}}, fill=yellow] (V24) [right= of V11] {};
		\node[circle, draw, label={[label distance=-7pt]above left:\LARGE{$v_3^2$}}, fill=yellow] (V23) [below right= of V24]{};
		\node[circle, draw, label={[label distance=-7pt]above left:\LARGE{$v_2^2$}}, fill=yellow] (V22) [below right= of V23]{};
		\node[circle, draw, label={[label distance=-7pt]above left:\LARGE{$v_1^2$}}, fill=yellow] (V21) [below right= of V22] {};
		\node[circle, xshift=-5pt, yshift= -60pt, label=left:\LARGE{$v_1^3$}, draw, fill=red] (V31) [below right= of V13] {};
		\node[circle, xshift= 35pt, label=left:\LARGE{$v_2^3$}, draw, fill=red] (V32) [right= of V31] {};
		\node[circle, xshift= 35pt, label={[label distance=-0pt]0:\LARGE{$v_3^3$}}, draw, fill=red] (V33) [right= of V32] {};
		\node[circle, xshift= 35pt, label=right:\LARGE{$v_4^3$}, draw, fill=red] (V34) [right= of V33] {};
		\node[ellipse, label= right:\LARGE{$V'_1$}, draw, rotate=45, minimum width=200pt, minimum height= 50pt, xshift=-55pt, yshift=-0pt] (V'1) {};
		\node[ellipse, label= right:\LARGE{$V'_2$}, draw, rotate=135, minimum width=200pt, minimum height= 50pt, xshift=-150pt, yshift=-95pt] (V'2) {};
		\node[ellipse, label= {[label distance=-25pt]below:\LARGE{$V'_3$}}, draw, minimum width=295pt, minimum height= 45pt, xshift=72pt, yshift=-150pt] (V'3) {};

		\draw[ultra thick] (V11) -- (V22);
		\draw (V11) -- (V23);
		\draw (V11) -- (V32);
		\draw[ultra thick] (V11) -- (V33);
		\draw (V12) -- (V21);
		\draw (V12) -- (V23);
		\draw (V12) -- (V31);
		\draw (V12) -- (V33);
		\draw (V13) -- (V21);
		\draw (V13) -- (V22);
		\draw (V13) -- (V31);
		\draw (V13) -- (V32);
		\draw (V21) -- (V32);
		\draw (V21) -- (V33);
		\draw (V22) -- (V31);
		\draw[ultra thick] (V22) -- (V33);
		\draw (V23) -- (V31);
		\draw (V23) -- (V32);
		\draw (V13) -- (V24);
		\draw (V13) -- (V34);
		\draw (V14) -- (V23);
		\draw (V14) -- (V33);
		\draw (V23) -- (V34);
		\draw (V24) -- (V33);

		\node[circle, draw, label=left:\LARGE{$\ell_1$}, xshift= -55pt, yshift=-15pt, fill=yellow] (l1) [above left= of V12] {};
		\node[circle, draw, label= 135:\LARGE{$\ell_2$}, xshift= 25pt, yshift=15pt, fill=red] (l2) [above right= of V23] {};
		\node[circle, draw, label=left:\LARGE{$\ell_3$}, yshift=-30pt, xshift=10pt, fill=blue] (l3) [below= of V32] {};

		\node[circle, draw, label=45:\LARGE{$r_1$}, xshift= -25pt, yshift=15pt, fill=yellow] (r1) [above left= of V12] {};
		\node[circle, draw, label=right:\LARGE{$r_2$}, xshift= 55pt, yshift=-15pt, fill=red] (r2) [above right= of V23] {};
		\node[circle, draw, label=right:\LARGE{$r_3$}, yshift=-30pt, xshift=-10pt, fill=blue] (r3) [below= of V33] {};
		\draw (l1) -- (V11);
		\draw (l1) -- (V12);
		\draw (l1) -- (V13);
		\draw (l1) -- (V14);
		\draw (l2) -- (V21);
		\draw (l2) -- (V22);
		\draw (l2) -- (V23);
		\draw (l2) -- (V24);
		\draw (l3) -- (V31);
		\draw (l3) -- (V32);
		\draw (l3) -- (V33);
		\draw (l3) -- (V34);

		\draw (r1) -- (V11);
		\draw (r1) -- (V12);
		\draw (r1) -- (V13);
		\draw (r1) -- (V14);
		\draw (r2) -- (V21);
		\draw (r2) -- (V22);
		\draw (r2) -- (V23);
		\draw (r2) -- (V24);
		\draw (r3) -- (V31);
		\draw (r3) -- (V32);
		\draw (r3) -- (V33);
		\draw (r3) -- (V34);

		\draw[bend right=30] (l2) edge (r1);
		\draw[bend right=90] (l1) edge (r3);
		\draw[bend left=65] (l1) edge (r2);
		\draw[bend right=90] (l3) edge (r2);
		\begin{pgfinterruptboundingbox}
		\draw[bend left=115,looseness=1.55] (l2) edge (r3);
		\draw[bend left=-115,looseness=1.55] (r1) edge (l3);
		\end{pgfinterruptboundingbox}
		\end{tikzpicture}
	\end{subfigure}
	\caption{An illustration of the construction of the graph in the proof of~\cref{chromnum}, exemplified on the left-hand side graph. 
			The triangle~$\{v_1, v_2, v_3\}$ on the left-hand side corresponds in the right-hand side to multiple triangles, one of them is highlighted by bold lines.}
	\label{tcolor}
	\end{figure}
	Let~$G'$ be initially empty.
	Add three copies~$V'_1,V'_2,V'_3$ of~$V$ to~$G'$, and let~$V'_i=\{v_1^i, \ldots, v_n^i\}$, $i~\in~\{1,2,3\}$.
	For each edge~$\{v_x,v_y\}\in E$, add the edge set~${\{\{v_x^i,v_y^j\}\mid i,j\in\{1,2,3\},i\neq j\}}$ to~$G'$.
	Add the vertex sets~\mbox{$L = \{\ell_1, \ell_2, \ell_3\}$} and~${R=\{r_1, r_2, r_3\}}$ to~$G'$.
	For each~${i\in\{1,2,3\}}$, connect~$\ell_i$,~$r_i$ with each vertex in~$V_i'$ by an edge, that is, add~$\{\{\ell_i, v_x^i\},\{r_i, v_x^i\} \mid i \in \{1,2,3\}, x \in [n]\}$ to the edge set of~$G'$.
	Finally, for ${i\neq j}$, connect~${\ell_i\in L}$ with~${r_j\in R}$ by an edge in $G'$, that is, add the edge set~$\{\{l_i, r_j\} \mid i,j\in\{1,2,3\},i \neq j\}$.
	This completes the construction of~$G'$.

	We prove that the properties \condRef{G}{prop:GPh-time}--\condRef{G}{prop:GPh-size} of \cref{def:k-GP-hard} are fulfilled.
	Note that~$G'$ is constructed in~$O(|G|)$~time~\condRef{G}{prop:GPh-time} and contains~$3n+6$ vertices and ${6m+6n+6}$~edges~\condRef{G}{prop:GPh-size}.
	Observe that each~\mbox{$V_i'$,~$i\in\{1,2,3\}$}, forms an independent set in~$G'$.
	In addition, each vertex set~$N(\ell_i)$,~$N(r_i)$,~$i\in\{1,2,3\}$, forms an independent set in~$G'$.

	Next, we prove that~$G$ contains a triangle if and only if~$G'$ contains a triangle~\condRef{G}{prop:GPh-equiv}.
	Suppose that~$\{v_x, v_y, v_z\}$ forms a triangle in~$G$. 
	Then, by construction of~$G'$,~$G'$~contains the vertices~$v_x^1$,~$v_y^2$, and~$v_z^3$ and the edges~$\{v_x^1, v_y^2\}$,~$\{v_y^2, v_z^3\}$, and~$\{v_x^1, v_z^3\}$. 
	Hence,~$\{v_x^1, v_y^2,v_z^3\}$ forms a triangle in~$G'$.

	Conversely, suppose that~$\{x,y,z\}$ forms a triangle in~$G'$.
	As~$V_i'$, $N(\ell_i)$, and~$N(r_i)$ form independent sets for each~$i\in\{1,2,3\}$ in~$G'$,~$\{x,y,z\}$ does not contain~$\ell_i$,~$r_i$, or at least two vertices of~$V_i'$,~$i\in\{1,2,3\}$.
	It follows that for each~$i\in\{1,2,3\}$ it holds~\mbox{$|\{x,y,z\}\cap V_i'|=1$}.
	Let without loss of generality be~$x = v_a^1 \in V'_1$,~$y =v_b^2 \in V'_2$, and~$z = v_c^3 \in V'_3$. 
	As for~$i \neq j$ it holds that~$\{v_a^i,v_b^j\} \in E'$ if and only if~$\{v_a, v_b\}\in E$, it follows that~$\{v_a, v_b, v_c\}$ forms a triangle in~$G$.

	Last, we prove that the domination number, the chromatic number, and the diameter of~$G'$ are all at most~$3$~\condRef{G}{prop:GPh-param}.
	As for each~$i\in\{1,2,3\}$, each vertex in~$V'_i$~is connected with~$\ell_i$, and~$N(r_i)\cap L\neq \emptyset$, the set~$L$ is a dominating set in~$G'$.
	As each~$V'_i$,~$i\in\{1,2,3\}$, is an independent set in~$G'$, color the vertices of~$V_i'$ with color~$i$. 
	Next, color the vertices~$\ell_i, r_i$ with color~$1+(i\mod 3)$ for each~$i\in\{1,2,3\}$. 
	This forms a valid coloring of the vertices in~$G'$ with at most three colors.
	Observe that since~$V_i\cup R\setminus\{r_i\}\in N(\ell_i)$ and~$V_i\cup L\setminus\{\ell_i\}\in N(r_i)$, each vertex in~$G'$ has distance at most two to~$\ell_i$ and~$r_i$, for each~$i\in\{1,2,3\}$.
	As each vertex in~$V_1'\cup V_2'\cup V_3'$ has a neighbor in~$L\cup R$, it follows that~$G'$ is of diameter at most three.

	Altogether, \condRef{G}{prop:GPh-time}--\condRef{G}{prop:GPh-size} of \cref{def:k-GP-hard} are satisfied and hence the proposition follows.
\end{proof}

\subsection{Enum-Advice Kernelization}
The second new notion we introduce in this paper is an adaption of an enumerative kernelization concept due to Creignou et al.~\cite{DBLP:conf/mfcs/CreignouMMSV13}.

The aim of kernelization is to efficiently reduce a large instance of a computationally hard, say~$\NP$-hard, problem to an ``equivalent'' instance (called ``kernel'') whose size only depends on the parameter and not on the size of the original input instance.
Then, solving the kernel by a trivial brute-force algorithm often significantly reduces the overall running time. 
This technique is by no means restricted to computationally hard problems even though it was invented to tackle problems for which no polynomial-time algorithms are known.

Observe that kernelization is usually defined for decision problems only.
Creignou et al.~\cite{DBLP:conf/mfcs/CreignouMMSV13} developed a concept to address enumeration problems.
Roughly speaking, their concept requires that \emph{all} solutions of the input instance can be recovered from the input instance and the solutions of the kernel 
(see \cref{fig:eak}\subref{sfig:ek}).
We modify the concept by adding a generic object which we call the \emph{advice} of the kernelization.
The intention behind this change is that in order to compute all solutions of the \emph{input} instance, one only needs the kernel and the advice (which might be much smaller than the input instance), see \cref{fig:eak} for an illustration.
In the examples we provide in this paper, in the advice we store information about all triangles that are destroyed by data reduction rules.

\begin{figure}[t]
  \centering
  \begin{subfigure}[b]{0.45\textwidth} \centering
  \begin{tikzpicture}[scale=0.69, every node/.style={scale=0.69}]
  	\def\x{1}
	  \node[para, minimum size = 3cm] (in)  at (0,0) {input~$(x,\ell)$};
	  \node[para, minimum size = 2cm, align=center] (k) at (5-0.5,0) {kernel\\$I(x,\ell)$};
	  \node[para] (s) at (5-0.5,-4+\x) {$W =  \sol(I(x,\ell))$};
	  \node[para] (S) at (0,-4+\x) {$\bigcup\limits_{w \in W}f(x,w) = \sol(x,\ell)$};
	  \draw[->, thick] (k) edge (s);
	  \node [above] at(2.2,0) {$\mathcal{R}$};
	  \draw[->, thick] (in) -- (k);
	  \draw[->, thick, bend left = 40] (1.46,-1.47) to ($(S.north east) + (-0.05,-0.05)$);
	  \draw[->, thick, bend right = 40] ($(s.north west) + (0.05,-0.05)$) to ($(S.north east) + (-0.05,-0.05)$) node [above right= 0.15cm and 0cm] {$\mathcal{T}_f$};
	  \draw[decorate, decoration={brace, amplitude=0.3cm}] (4-0.5, 0.91) -- (6-0.5, 0.91) node[midway, above=0.3cm] {$\leq h(\ell)$};  
  \end{tikzpicture}
  \subcaption{Enum kernelization.}\label{sfig:ek}\end{subfigure}
\hfill
  \begin{subfigure}[b]{0.5\textwidth}\centering
  \begin{tikzpicture}[scale=0.69, every node/.style={scale=0.69}]
  	\def\x{1}
	  \node[para, minimum size = 3cm] (in)  at (0,0) {input~$(x,\ell)$};
	  \node[para, minimum size = 2cm, align=center] (k) at (6,0) {kernel\\~$I(x,\ell)$};
	  \node[para, minimum size = 1.5cm, align=center, dashed] (A) at (4.25,-0.25) {advice\\~$A(x,\ell)$};
	  \node[para] (s) at (6,-4+\x) {$W = \sol(I(x,\ell))$};
	  \node[para] (S) at (1,-4+\x) {$\bigcup\limits_{w \in W}f(w, A(x,\ell)) = \sol(x,\ell)$};
	  \draw[->, thick] (k) to (s);
	  \node [below] at(2.7,0.75) {$\mathcal{R}$};
	  \draw[->, thick] (1.5,0.75) -- (5,0.75);
	  \draw[->, thick, bend left=5] (1.8,0.75) to (4.25,0.5);
	  \draw[->, thick, bend left = 15] (A) to ($(S.north east) + (-0.05,-0.05)$);
	  \draw[->, thick, bend right = 60] ($(s.north west) + (0.05,-0.05)$) to ($(S.north east) + (-0.05,-0.05)$) node [above right= 0.2cm and 0.4cm] {$\mathcal{T}_f$};
	  \draw[decorate, decoration={brace, amplitude=0.3cm}] (5, 0.91) -- (7, 0.91) node[midway, above=0.3cm] {$\leq h(\ell)$};
  \end{tikzpicture}
  \subcaption{Enum-advice kernelization.}\label{sfig:eak}\end{subfigure}
	\caption{
		A schematic picture of \subref{sfig:ek} enum- 
		and \subref{sfig:eak} enum-advice 
		kernelization. 
		Here, $\mathcal{R}$~refers to the algorithm that produces the kernel and, for enum-advice kernelization, also the advice, and~$\mathcal{T}_f$ refers to the polynomial-delay algorithm enumerating all solutions of the input.
	}
\label{fig:eak}
\end{figure}
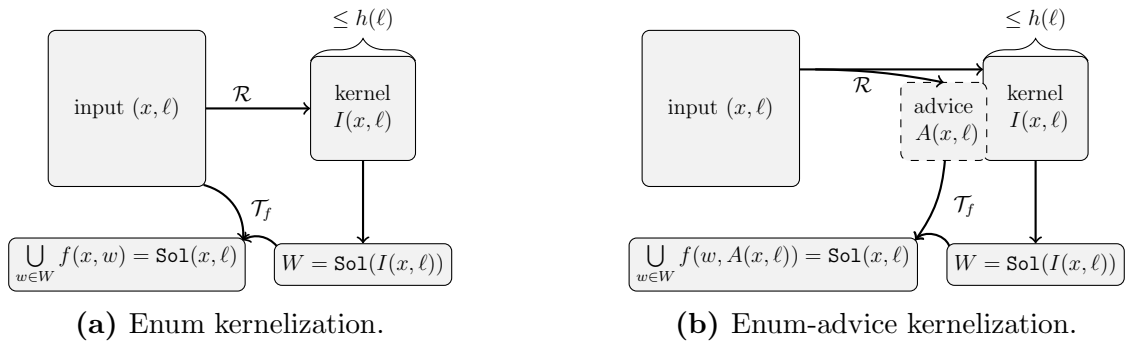
We will now give a formal definition of our new enumerative kernelization concept and then discuss the potential advantages compared to the concept by Creignou~et~al.~\cite{DBLP:conf/mfcs/CreignouMMSV13}.

\begin{definition}
	\label{def:eakernel}
	Let~$(P, \sol)$ be a parameterized enumeration problem.
	Let~$\mathcal{R}$ be an algorithm which for every input~$(x, k)$ computes in time polynomial in~$|x|+k$ a pair~$(I(x,k), \allowbreak A(x,k))$. 
	We call~$\mathcal{R}$ an \emph{enum-advice kernelization} of~$(P, \sol)$ if
	\begin{enumerate}[\hspace{3pt}({K}1)]
		\item there exists a function~$h$ such that for all~$(x,k)$ it holds that~$|I(x,k)| \leq h(k)$,\label{prop:eak-small}
		\item for all~$(x,k)$ it holds that~$(x,k) \in P \iff I(x,k) \in P$, and \label{prop:eak-correct} 
		\item there exists a function~$f$ such that for all~$(x,k) \in P$ \label{prop:eak-enum} 
		\begin{enumerate}
			\item~$\forall p,q \in \sol(I(x,k)) \colon p \neq q \implies f(p, A(x,k)) \cap f(q, A(x,k)) = \emptyset$, \label{prop:eak-enum-disjoint}
			\item~$\bigcup_{w \in \sol(I(x,k))} f(w, A(x,k)) = \sol(x,k)$, and \label{prop:eak-enum-all}
			\item there exists an algorithm~$\mathcal{T}_f$ such that for every~$(x,k) \in P$ and ${w \in \sol(I(x,k))}$, ${\mathcal{T}_f}$ computes~$f(w,A(x,k))$ in FPT-delay time~\cite{DBLP:conf/mfcs/CreignouMMSV13}.\label{prop:eak-enum-algo}
		\end{enumerate}
		\end{enumerate}
	If~$\mathcal{R}$ is an enum-advice kernelization of~$(P, \sol)$, then~$I(x,k)$ is called the \emph{kernel} of~$(x,k)$ and~$A(x,k)$ is called the \emph{advice} of~$I(x,k)$.
	If algorithm~$\mathcal{T}_f$ has~$p$-delay time for some polynomial~$p$ $($only in~$|x|)$, then we say that the problem admits a \emph{$p$-delay enum-advice kernel}.
\end{definition}
Clearly, since every polynomial-time solvable enumeration problem has a trivial enum-advice kernelization, we are only interested in those kernelizations where~$\mathcal{R}$ and~$\mathcal{T}_f$ are both significantly faster than the best (known) algorithms to solve the enumeration problem.

We will now discuss the potential advantages of our new definition compared to enum-kern\-eli\-za\-tion. 
The advice can be used to design faster algorithms since the advice might be much smaller than the input instance, as described in the following example.
Observe that one can set~$A(x, \ell) = (x,\ell)$, and thus enum-advice kernelization is a generalization of enum-kernelization.
  \begin{example}
	  Consider the \textsc{Enum Vertex Cover} problem parameterized by solution size~$k$; here the task is to list all minimal vertex covers of size at most~$k$ of an input graph~$G$. 
	  As observed by Creignou et al. \cite[Proposition~1]{DBLP:conf/mfcs/CreignouMMSV13}, the standard Buss’ kernelization~\cite{DF13}  provides an enum-kernelization for this problem. 
	  It consists of the following two data reduction rules:
	  \begin{enumerate}
		  \item If~$\operatorname{deg}(v) > k$, then remove~$v$ and all edges incident with~$v$ from~$G$ and decrease~$k$ by one ($v$ is contained in every vertex cover of size at most~$k$).
		  \item If~$\operatorname{deg}(v) = 0$, then remove~$v$ from~$G$ (no minimal vertex cover contains~$v$).
	  \end{enumerate}
	  Let~$G'$ be the graph obtained from exhaustively applying the two data reduction rules on~$G$ such that none of the two rules is applicable to~$G'$.
	  If~$G'$ contains more than~$k^2$ edges, then return the complete graph of~$k+2$ vertices as there is no vertex cover of size~$k$.
	  Otherwise, return graph~$G'$.
	  
	  Let~$V_D$ be the set of vertices that are deleted by the first rule.
	  The set of minimal vertex covers in the input graph~$G$ can be obtained by adding~$V_D$ to each minimal vertex cover in the kernel~$G'$.
	  For the set of minimal vertex covers of~$G$ of size at most~$k$ one considers only minimal vertex covers in~$G'$ that have size at most~$k-|V_D|$.
	  In an enum-advice kernel one can store~$V_D$ in the advice, which then has size~$O(k)$.
	  To compute all minimal vertex covers in~$G$, we compute all minimal vertex covers in~$G'$ and add~$V_D$ to each.
	  As~$|V_D|\leq k$, we can add~$V_D$ to a computed minimal vertex cover of~$G'$ in~$O(k)$ time.
	  In contrast, the enum-kernelization concept would require to (re-)compute~$V_D$ for each minimal vertex cover in~$G'$ which requires time linear in the size of~$G$.
	  We refer to \cref{fig:enumvc} for an illustration.
	  When considering~$\NP$-hard problems replacing a term like~$n+m$ by~$k$ might be only a small improvement, but for polynomial-time solvable problems (like \trenum{}) it can have a significant impact.
	  
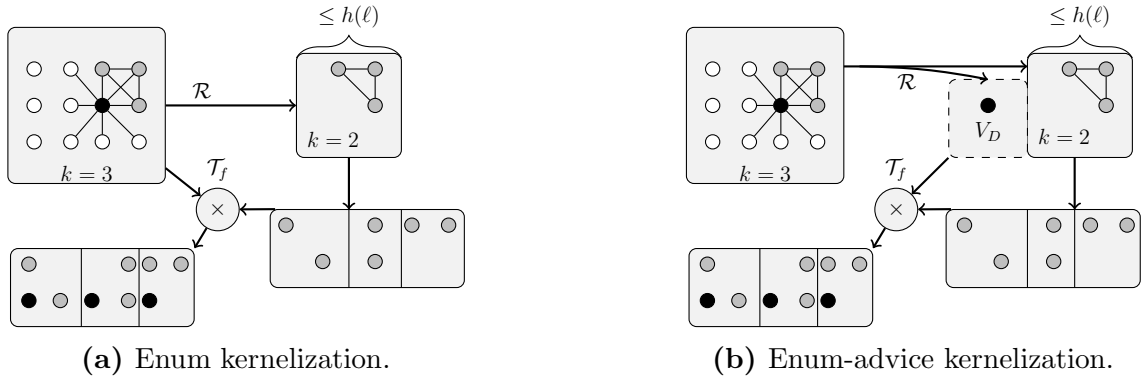
\begin{figure}[t]
  \centering
  \begin{subfigure}[b]{0.45\textwidth}\centering
  \begin{tikzpicture}[scale=0.69, every node/.style={scale=0.69}]
  	\def\x{1}  	
  	\def\ysh{0.0}
  	\def\yshx{0.75}
	  \node[para, minimum size = 3cm] (in)  at (0,0) {};
	  {
		\node[wnode] (v1) at (-1,0) {};
		\node[wnode] (v2) at (-1,0.7) {};
		\node[wnode] (v3) at (-1,-0.7) {};
		\node[wnode] (v4) at (-0.3,0) {};
		\node[wnode] (v5) at (-0.3,0.7) {};
		\node[wnode] (v6) at (-0.3,-0.7) {};
		\node[bnode] (v7) at (0.3,0) {};
		\node[gnode] (v8) at (0.3,0.7) {};
		\node[wnode] (v9) at (0.3,-0.7) {};
		\node[gnode] (v10) at (1,0) {};
		\node[gnode] (v11) at (1,0.7) {};
		\node[wnode] (v12) at (1,-0.7) {};
		\node (vk) at (0,-1.3) {$k=3$};

		\draw (v7) -- (v4);
		\draw (v7) -- (v5);
		\draw (v7) -- (v6);
		\draw (v7) -- (v8);
		\draw (v7) -- (v9);
		\draw (v7) -- (v10);
		\draw (v7) -- (v11);
		\draw (v7) -- (v12);
		
		\draw (v8) -- (v11);
		\draw (v10) -- (v11);
		\draw (v10) -- (v8);
	  }
	  \node[para, minimum size = 2cm, align=center] (k) at (5,0) {};
	  {
		\node[gnode] (v8) at (4.8,0.7) {};
		\node[gnode] (v10) at (5.5,0) {};
		\node[gnode] (v11) at (5.5,0.7) {};
		\node (vk) at (4.7,-0.65) {$k=2$};
	
		\draw (v8) -- (v11);
		\draw (v10) -- (v11);
		\draw (v10) -- (v8);
	  }

	    \node[para, minimum height = 1.5cm, minimum width = 3.7cm] (s) at (5.35,-2.75-\ysh) {};

	  {
		\node[gnode] (v11) at (6.2,-2.3-\ysh) {};
		\node[gnode] (v12) at (6.9,-2.3-\ysh) {};
		\draw (6,-2-\ysh) -- (6,-3.5-\ysh);
	  }
	  {
		\node[gnode] (v10) at (5.5,-3-\ysh) {};
		\node[gnode] (v11) at (5.5,-2.3-\ysh) {};
		\draw (5,-2-\ysh) -- (5,-3.5-\ysh);
	  }
	  {
		\node[gnode] (v8) at (3.8,-2.3-\ysh) {};
		\node[gnode] (v10) at (4.5,-3-\ysh) {};
	  }

	  \node[para, minimum height = 1.5cm, minimum width = 3.5cm] (S) at (0.3,-2.75-\yshx) {};

	  {
	  	\node[bnode] (v8) at (1.2,-3-\yshx) {};
		\node[gnode] (v11) at (1.8,-2.3-\yshx) {};
		\node[gnode] (v12) at (1.2,-2.3-\yshx) {};
		\draw (1,-2-\yshx) -- (1,-3.5-\yshx);
	  }
	  {
	  	\node[bnode] (v8) at (0.1,-3-\yshx) {};
		\node[gnode] (v10) at (0.8,-3-\yshx) {};
		\node[gnode] (v11) at (0.8,-2.3-\yshx) {};
		\draw (-0.1,-2-\yshx) -- (-0.1,-3.5-\yshx);
	  }
	  {
	  	\node[bnode] (v8) at (-1.1,-3-\yshx) {};
		\node[gnode] (v8) at (-1.1,-2.3-\yshx) {};
		\node[gnode] (v10) at (-0.5,-3-\yshx) {};
	  }

	  \draw[->, thick] (k) edge (5,-2);
	  \node [above] at(2.2,0) {$\mathcal{R}$};
	  \draw[->, thick] (in) -- (k);
	  \node (Tf) at (2.5,-2)[para,label=90:{$\mathcal{T}_f$},draw,circle]{$\times$};
	  \draw[->, thick] ($(s.north west)-(-0.1,0)$) to (Tf);
	  \draw[->, thick] (in) to (Tf);
	  \draw[->, thick] (Tf) to (S.north east);
	  \draw[decorate, decoration={brace, amplitude=0.3cm}] (4, 0.91) -- (6, 0.91) node[midway, above=0.3cm] {$\leq h(\ell)$};  
  \end{tikzpicture}
  \subcaption{Enum kernelization.}\label{sfig:ekvc}\end{subfigure}
  \hfill
  \begin{subfigure}[b]{0.45\textwidth}\centering
  \begin{tikzpicture}[scale=0.69, every node/.style={scale=0.69}]
  	\def\x{1}
  	\def\ysh{0.0}
  	\def\yshx{0.75}
	  \node[para, minimum size = 3cm] (in)  at (0,0) {};
	  {
		\node[wnode] (v1) at (-1,0) {};
		\node[wnode] (v2) at (-1,0.7) {};
		\node[wnode] (v3) at (-1,-0.7) {};
		\node[wnode] (v4) at (-0.3,0) {};
		\node[wnode] (v5) at (-0.3,0.7) {};
		\node[wnode] (v6) at (-0.3,-0.7) {};
		\node[bnode] (v7) at (0.3,0) {};
		\node[gnode] (v8) at (0.3,0.7) {};
		\node[wnode] (v9) at (0.3,-0.7) {};
		\node[gnode] (v10) at (1,0) {};
		\node[gnode] (v11) at (1,0.7) {};
		\node[wnode] (v12) at (1,-0.7) {};
		\node (vk) at (0,-1.3) {$k=3$};

		\draw (v7) -- (v4);
		\draw (v7) -- (v5);
		\draw (v7) -- (v6);
		\draw (v7) -- (v8);
		\draw (v7) -- (v9);
		\draw (v7) -- (v10);
		\draw (v7) -- (v11);
		\draw (v7) -- (v12);
		
		\draw (v8) -- (v11);
		\draw (v10) -- (v11);
		\draw (v10) -- (v8);
	  }
	  \node[para, minimum size = 2cm, align=center] (k) at (6,0) {};
	  
	  {
		\node[gnode] (v8) at (5.8,0.7) {};
		\node[gnode] (v10) at (6.5,0) {};
		\node[gnode] (v11) at (6.5,0.7) {};
		\node (vk) at (5.7,-0.6) {$k=2$};
	
		\draw (v8) -- (v11);
		\draw (v10) -- (v11);
		\draw (v8) -- (v10);
	  }
	  
	  \node[para, minimum size = 1.5cm, align=center, dashed] (A) at (4.25,-0.25) {};
	  
	  {
		\node[bnode] (v8) at (4.25,0) {};
		\node (vD) at (4.25,-0.5) {$V_D$};
	  }
	  
	  \node[para, minimum height = 1.5cm, minimum width = 3.7cm] (s) at (5.3,-2.75-\ysh) {};

	  {
		\node[gnode] (v11) at (6.2,-2.3-\ysh) {};
		\node[gnode] (v12) at (6.9,-2.3-\ysh) {};
		\draw (5.9,-2-\ysh) -- (5.9,-3.5-\ysh);
	  }
	  {
		\node[gnode] (v10) at (5.5,-3-\ysh) {};
		\node[gnode] (v11) at (5.5,-2.3-\ysh) {};
		\draw (5,-2-\ysh) -- (5,-3.5-\ysh);
	  }
	  {
		\node[gnode] (v8) at (3.8,-2.3-\ysh) {};
		\node[gnode] (v10) at (4.5,-3-\ysh) {};
	  }

	  \node[para,minimum height = 1.5cm, minimum width=3.5cm] (S) at (0.3,-2.75-\yshx) {};

	  {
	  	\node[bnode] (v8) at (1.2,-3-\yshx) {};
		\node[gnode] (v11) at (1.8,-2.3-\yshx) {};
		\node[gnode] (v12) at (1.2,-2.3-\yshx) {};
		\draw (1,-2-\yshx) -- (1,-3.5-\yshx);
	  }
	  {
	  	\node[bnode] (v8) at (0.1,-3-\yshx) {};
		\node[gnode] (v10) at (0.8,-3-\yshx) {};
		\node[gnode] (v11) at (0.8,-2.3-\yshx) {};
		\draw (-0.1,-2-\yshx) -- (-0.1,-3.5-\yshx);
	  }
	  {
	  	\node[bnode] (v8) at (-1.1,-3-\yshx) {};
		\node[gnode] (v9) at (-1.1,-2.3-\yshx) {};
		\node[gnode] (v10) at (-0.5,-3-\yshx) {};
	  }

	  \draw[->, thick] (5.9,-1) to (5.9,-2);
	  \node [below] at(2.7,0.75) {$\mathcal{R}$};
	  \draw[->, thick] (1.5,0.75) -- (5,0.75);
	  \draw[->, thick, bend left=5] (1.8,0.75) to (4.25,0.5);
	  \node (Tf) at (2.5,-2)[para,label=90:{$\mathcal{T}_f$},draw,circle]{$\times$};
	  \draw[->, thick] ($(s.north west)-(-0.1,0)$) to (Tf);
	  \draw[->, thick] (A) to (Tf);
	  \draw[->, thick] (Tf) to (S.north east);
	  \draw[decorate, decoration={brace, amplitude=0.3cm}] (5, 0.91) -- (7, 0.91) node[midway, above=0.3cm] {$\leq h(\ell)$};
  \end{tikzpicture}
  \subcaption{Enum-advice kernelization.}\label{sfig:eakvc}\end{subfigure}
	\caption{
		A schematic picture of \subref{sfig:ekvc} enum- and \subref{sfig:eakvc} enum-advice kernelization for \textsc{Enum Vertex Cover}. In this example the kernel only consists of a single vertex while the input instance consists of 12 vertices and 11 edges.
	}
	\label{fig:enumvc}
\end{figure}	  
	  
  \end{example}

As discussed in the above example, the advice can be used to efficiently compute the solutions from the kernel.
In general, enumeration algorithms can be derived from enum-advice kernels as stated in the next lemma.

\begin{lemma}
	\label{lem:kernel-alg}
	Let~$\mathcal{R}$ be an enum-advice kernelization of a parameterized enumeration problem~$(P, \sol)$ such that for every instance~$(x,k)$ of~$P$:
	\begin{itemize}
		\item $\mathcal{R}$ runs in~$O((|x| + k)^c)$ time for some constant~$c$; 
		\item the unparameterized version of~$P$ can be solved in~$g(|x|)$ time on~$x$;
		\item the kernelization computes the pair~$(I,A)$ where~$|I| \leq h(k)$, and algorithm~$\mathcal{T}_f$ takes $O(|I|^d)$ time between generating two solutions for some constant~$d$; and
		\item $\hashs$ denotes the number of solutions in~$I$ and~$\hashS$ denotes the number of solutions in~$x$. 
	\end{itemize}
	Then,~$(P, \sol)$ can be solved in~$O((|x|+k)^c + g(h(k)) + (\hashs + \hashS) \cdot h(k)^d)$ time.
\end{lemma}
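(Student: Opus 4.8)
The plan is to pipeline the three given subroutines---the kernelization~$\mathcal{R}$, the exact solver for the unparameterized problem, and the decoding algorithm~$\mathcal{T}_f$---and then charge the total work to the three summands in the claimed bound.

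First I would run~$\mathcal{R}$ on~$(x,k)$ to produce the pair~$(I,A)$ in~$O((|x|+k)^c)$ time; by~\condRef{K}{prop:eak-small} the resulting kernel satisfies~$|I| \le h(k)$. Next I would feed the kernel~$I$ to the assumed~$g$-time solver for the unparameterized version of~$P$ and thereby enumerate the entire set~$\sol(I)$. Since~$|I| \le h(k)$, and taking~$g$ to be nondecreasing (which we may assume), this step runs in~$g(|I|) \le g(h(k))$ time and outputs the~$\hashs$ kernel solutions. Finally, for each kernel solution~$w \in \sol(I)$ I would invoke~$\mathcal{T}_f$ to list the set~$f(w,A)$. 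Correctness of the overall enumeration is immediate from property~\condRef{K}{prop:eak-enum}: the sets~$f(w,A)$ are pairwise disjoint and their union equals~$\sol(x,k)$, so running over all~$w$ emits every solution of the input exactly once.

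The running-time accounting of this last step is the crux, and in particular it explains the otherwise mysterious~$\hashs$ term. Fix a kernel solution~$w$ and write~$t_w = |f(w,A)|$. Because~$\mathcal{T}_f$ has delay~$O(|I|^d) = O(h(k)^d)$, computing~$f(w,A)$ takes~$O((t_w + 1)\cdot h(k)^d)$ time, where the ``$+1$'' absorbs the time spent before the first output together with the terminal delay needed to certify that the set is exhausted; this per-invocation overhead, incurred once for each of the~$\hashs$ kernel solutions, is exactly the source of the additive~$\hashs \cdot h(k)^d$. Summing over all~$w$ and using~$\sum_w t_w = |\sol(x,k)| = \hashS$ (again by the disjointness and covering conditions) bounds the decoding phase by~$O((\hashs + \hashS)\cdot h(k)^d)$.

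Adding the three phases yields the claimed total~$O((|x|+k)^c + g(h(k)) + (\hashs + \hashS)\cdot h(k)^d)$. The only subtlety to get right is this overhead bookkeeping for~$\mathcal{T}_f$ on kernel solutions that contribute few or no input solutions; everything else is a direct substitution of the hypotheses.
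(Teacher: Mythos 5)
Your proposal is correct and follows essentially the same route as the paper's proof: run the kernelization, solve the kernel with the unparameterized solver, then decode each kernel solution via~$\mathcal{T}_f$, charging the per-invocation overhead of the~$\hashs$ calls and the per-output delay of the~$\hashS$ emitted solutions separately to obtain the~$(\hashs+\hashS)\cdot h(k)^d$ term. Your accounting of the ``$+1$'' pre-/postcalculation overhead per kernel solution is exactly the paper's argument, stated slightly more explicitly.
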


\begin{proof}
	We use the notation as stated in the lemma and assume that all stated conditions hold.
	We will prove that there exists an algorithm solving~$(P, \sol)$ in~$O((|x|+k)^c + g(h(k)) + (\hashs + \hashS) \cdot h(k)^d)$ time.

	First, compute the kernel~$(I,A)$ in~$O((|x| + k)^c)$ time. 
	Second, find all solutions in~$I$ in~$g(|I|) \in O(f(h(k)))$ time. 
	Third, compute~$\bigcup_{w \in \sol(I)} f(w, A)$. 
	This can be done by running~$\mathcal{T}_f$ on every solution in~$I$. 
	There are~$\hashs$ solutions in~$I$, hence there are~$\hashs$ iterations of~$\mathcal{T}_f$. 
	The algorithm therefore spends at most~$O(\hashs \cdot |I|^d)$ time as precalculation or postcalculation time.
	We refer to Creignou et al.~\cite{DBLP:conf/mfcs/CreignouMMSV13} for a formal definition of pre- and postcalculation time.
	Informally, precalculation time is the time from the start of the calculation until the first solution is listed and postcalculation time is the time needed after the last solution is listed.
	Apart from pre- and postclaculation time, the time between two solution outputs is at most~$O(|I|^d) \subseteq O(h(k)^d)$. 
	Since there are~$\hashS$ solutions in~$x$, computing all solutions takes~$(\hashs+\hashS) \cdot (h(k))^d$ time. 
	Thus, this algorithm takes~$O((|x|+k)^c + g(h(k)) + (\hashs + \hashS) \cdot h(k)^d)$ time.
\end{proof}

Note that in general we cannot give any meaningful upper bound on the delay of the constructed algorithm as the kernel instance might be packed with solutions~$p$ such that~$f(w,A) = \emptyset$.
If no such solutions exist, then the delay of the described algorithm is~$O((|x|+k)^c + f(h(k)) +  h(k)^d))$.
The delay of all algorithms presented in our work is only upper-bounded by the respective running times of the algorithms.
\label{sec:eakernels}

\section{Algorithms}
\label{sec:algs}
\begin{figure}[t!]
	\centering
	\begin{tikzpicture}[scale=0.85, transform shape]

	\def\x{1.55}
	\def\y{1.55}
	\def\sc{1}
	\def\scx{1.05}
	\tikzstyle{xnode}=[fill=lightgray!30,rounded corners, draw, align=center,scale=\scx]
	\tikzstyle{xxnode}=[fill=white,rounded corners, draw, align=center,scale=\sc]

	\draw[dashed,very thick,color=green!70!black,-] (-7.75*\x,-1.0*\y) rectangle (4.25*\x,4.15*\y);
	\draw[dashed,very thick,color=orange!85!black,-] (-7.75*\x,-1.1*\y) rectangle (4.25*\x,-2.4*\y);
	\draw[dashed,very thick,color=red,-] (-7.75*\x,-2.5*\y) rectangle (4.25*\x,-4.75*\y);
	\draw[dashed,very thick,color=blue,-] (-7.2*\x,0.5*\y) rectangle (3.7*\x,3.3*\y);
	
	\node (baderx) at (0*\x,3.75*\y)[xnode]{Distance to $d$-degenerate + maximum degree~(\Cref{Bader})};

	\node (fes) at (-5.25*\x,2.7*\y)[xnode]{Feedback edge number \\  (\Cref{fesnkernel}, \Cref{cor:fesnalg})};

	\node (dddr) at  (0*\x,1.9*\y)[xnode,minimum height=2.5*\y cm, minimum width=6.5*\x cm]{};
	\node (ddd) at (-0.0*\x,2.8*\y)[rotate=0, scale=1.0]{Distance to $d$-degenerate (\Cref{thm:dtddkernel}, \Cref{cor:dtddalg})};
	\node (vc) at (-0.85*\x,2.3*\y)[xxnode]{Vertex cover number (0-degenerate)};
	\node (fvs) at (-0.25*\x,1.55*\y)[xxnode]{Feedback vertex number (1-degenerate)};
	\node (dots) at  (1*\x,1.1*\y)[scale=1.2]{$\vdots$};

	\node (cog) at (-3.8*\x,-0.25*\y)[xnode]{Distance to \\ Cograph \\ (\Cref{thm:distbipchorcog}\ref{thm:distcog})};

	\node (deg) at (3*\x,-0.5*\y)[xnode]{Degeneracy~\cite{DBLP:journals/siamcomp/ChibaN85}};
	\node (bip) at (0.75*\x,-0.25*\y)[xnode]{Distance to \\ Bipartite \\(\Cref{thm:distbipchorcog}\ref{thm:distbip})};
		\node (chordal) at (-1.3*\x,-0.25*\y)[xnode]{Distance to \\ Chordal \\ (\Cref{thm:distbipchorcog}\ref{thm:distchor})};
		
	\node (cw) at (-2.5*\x,-1.75*\y)[xnode]{Clique-width \\ (\Cref{thm:cliquewidth})};
	\node (avg) at (3*\x,-1.75*\y)[xnode]{Average \\ degree~\cite{DBLP:journals/siamcomp/ItaiR78}};

	\node (dom) at (-5.5*\x,-3.1*\y)[xnode]{Domination \\ Number~(\Cref{chromnum})};
	\node (diam) at (-3.8*\x,-4.1*\y)[xnode]{Diameter~(\Cref{chromnum})};
	\node (chrom) at (-0.0*\x,-3.1*\y)[xnode]{Chromatic \\ Number~(\Cref{chromnum})};
	\node (mindeg) at (1*\x,-4.15*\y)[xnode]{Minimum degree \\ (\Cref{sec:hardness})};

	\draw[thick] (fvs) -- (vc);
	\draw[thick] (fvs.north west) -- (fes);
	\draw[thick] (cw) -- ($(fvs.south west)+(0.45,0)$);
	\draw[thick] (dddr) -- (baderx);
	\draw[thick] (dddr) -- (deg);

	\draw[thick] (deg) -- (avg);
	\draw[thick] (mindeg) -- (avg);

	\draw[thick] (deg) -- (chrom);
	\draw[thick] (deg) -- (avg);
	\draw[thick] (mindeg) -- (avg);

	\draw[thick] (cw) -- (cog);
	\draw[thick] (cog) -- (vc.south west);
	\draw[thick] (diam) -- (cog);
	\draw[thick] (diam) -- (dom);
	
	\draw[thick] (chrom) -- (mindeg);
	\draw[thick] (bip) -- (fvs);
	\draw[thick] (bip) -- (chrom);
	\draw[thick] (fvs) -- (chordal);
	
	\node at (-6.3*\x,-0.65*\y)[scale=1.3,color=green!50!black]{$f(k) \cdot (n+m)$};
	\node at (-6.8*\x,-2.1*\y)[scale=1.3,color=orange!85!black]{$f(k) \cdot n^2$};
	\node at (-6.75*\x,-4.25*\y)[scale=1.3,color=red]{GP-hard};
	\node at (-5.9*\x,1.0*\y)[scale=1.3,color=blue,align=center]{enum-advice \\ kernel};

	\end{tikzpicture}
	\caption{
		``Layerwise separation'' of considered parameters with respect to known and new results. Herein, the parameters are hierarchically arranged in the sense that if two parameters are connected by a line, then the lower one can be upper-bounded by some function only depending on the higher one. Thus, hardness results transfer downwards and tractability results upwards.
		For the family of parameters distance to~$d$-degenerate graphs we highlighted~$d=0$ and~$d=1$ as prominent examples.
	}
	\label{fig:paramh}
\end{figure}
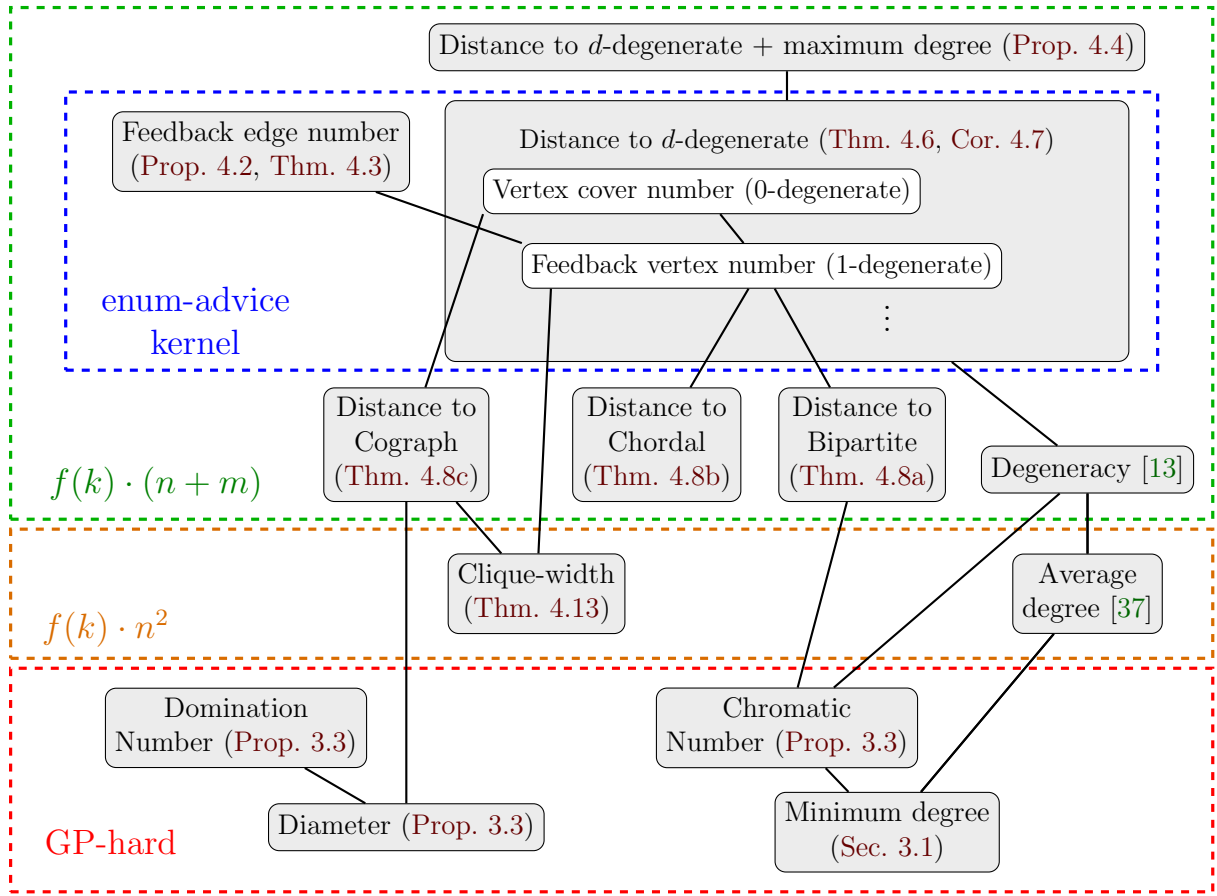
In this section, we present FPT algorithms solving \trenum{} exploiting several parameters.
We systematically explore the parameter space along a hierarchy of graph parameters (see~\cite{parahi}) in the following way (\cref{fig:paramh} surveys our line of attack).
We start from the fact that \mbox{\trenum{}} allows for an $O(m\cdot d)$-time algorithm when parameterized by degeneracy~$d$~\cite{DBLP:journals/siamcomp/ChibaN85}, and go into the following two directions: 
First, we study in \cref{sec:abovedeg} whether parameters upper-bounding degeneracy 
admit algorithms running in~${f(k) + O(n+m)}$ time, where~$k$ is the respective parameter value.
Kernelization is one way to achieve such \emph{additive}~($f(k) + O(n+m)$) instead of \emph{multiplicative} ($f(k)\cdot O(n+m)$) running times.
Indeed, we will present enum advice-kernelizations in this section.
Second, we study in \cref{sec:unrelatedtodeg} parameters that are
incomparable with degeneracy and so far have been unclassified.

We remark that in most of our running times the dependency on the parameter is modest. 
Thus, in scenarios where the respective parameter is small, the corresponding algorithms should be suitable for implementation.

\subsection{Parameters Lower-Bounded By Degeneracy}
\label{sec:abovedeg}
In this section we show results on graph parameters that upper-bound a graph's degeneracy.
In each subsection, we first describe the respective parameter and then turn to the results.

\subsubsection{Feedback Edge Number}
We begin with feedback edge number.
A \emph{feedback edge set} in a graph is a subset of the edges such that removing the edges from the graph results in a forest. 
The \emph{feedback edge number} of a graph is the size of a minimum feedback edge set.
Graphs with small feedback edge set number are ``almost trees''; such social networks occur in the context of sexually transmitted infections~\cite{PPM02} and extremism propagation~\cite{FNKS08}.
This parameter was recently used to achieve a significant speed-up in the computation of maximum matchings \cite{KNNZ18}.
The feedback edge number is neither related to the distance to $0$-degenerate graphs (vertex cover number) nor to the maximum degree, but it upper-bounds the distance to $1$-degenerate graphs (feedback vertex number).
Note that the feedback edge number is~$m-n+c$ where~$c$ is the number of connected components.
Hence the parameter can be of order~$O(m)$.
It can be computed in linear time by e.\,g.\ breadth-first-search.
We hence assume that an optimal feedback edge set is given.

We first provide a key lemma and then state a linear-size enum-advice kernel for \trenum{} parameterized by feedback edge number.
Recall that the feedback edge number of a graph is the size of a smallest subset of the edges such that removing the edges from the graph results in a forest.

\begin{lemma}
\label{lem:festri}
Let~$G=(V,E)$ be an undirected graph and let~$F$ be a feedback edge set in~$G$. All triangles~$\{u,v,w\}$ where at least one of the edges between the three vertices is not in~$F$ can be enumerated in~$O(n+m)$ time. There are at most~$2|F|$ such triangles.
\end{lemma}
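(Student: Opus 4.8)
The plan is to exploit the tree structure that results after removing the feedback edge set. Let $F$ be the given feedback edge set, so that $G - F$ (the graph with the same vertex set but with the edges of $F$ removed) is a forest, which I will denote by $T$. Every triangle $\{u,v,w\}$ that uses at least one non-$F$ edge contains an edge $e \in E \setminus F$, i.e.\ an edge of the forest $T$. So it suffices to iterate over the edges of $T$ and, for each tree edge $\{u,v\}$, find all vertices $w$ that complete a triangle, i.e.\ all $w$ adjacent to both $u$ and $v$.

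First I would argue the counting bound, since it drives the running time. The key observation is that a triangle using a tree edge $\{u,v\}$ must also use an edge incident to $w$; if the triangle were entirely inside $T$ it would create a cycle in the forest, a contradiction. Hence every such triangle must contain at least one edge of $F$ in addition to the tree edge. I would charge each triangle to one of its $F$-edges: for a fixed edge $f = \{a,b\} \in F$, the triangles containing $f$ and at least one tree edge are determined by a common neighbor $w$ forming a path $a$--$w$--$b$ using only tree edges. Because $T$ is a forest, between $a$ and $b$ there is at most one path of length two inside $T$, so $f$ is contained in at most two such triangles (accounting for the two orientations/endpoints, or more carefully: at most one triangle of the form ``$f$ plus a two-tree-edge path,'' plus triangles where a second $F$-edge is involved, which I would recount to land on the clean bound). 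Summing over all $f \in F$ yields at most $2|F|$ triangles. The cleanest route is to show each triangle of the stated type contains at least one $F$-edge, then bound by $\sum_{f\in F} (\text{number of length-two } T\text{-paths between endpoints of } f) \le 2|F|$.

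For the enumeration in $O(n+m)$ time, I would root each tree of the forest $T$ and mark parent pointers so that adjacency-in-$T$ queries are cheap. The concrete algorithm: for each vertex $v$, mark its neighbors in $G$ in an auxiliary array; then for each tree edge $\{u, \operatorname{parent}(u)\}$, scan the neighborhood of $u$ (in $T$, i.e.\ the children and parent of $u$) and test membership against the marked neighborhood of $\operatorname{parent}(u)$, reporting a triangle whenever a common neighbor is found; unmark afterward. Since the total work of marking and scanning is bounded by $\sum_v \deg(v) = O(m)$ plus $O(n)$ overhead for the tree traversal and bookkeeping, the whole procedure runs in $O(n+m)$ time.

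The main obstacle I anticipate is the counting argument rather than the enumeration. Getting exactly the bound $2|F|$ (as opposed to a looser $O(|F|)$) requires care in how triangles are charged to $F$-edges and in avoiding double counting when a triangle contains two $F$-edges. The natural approach is to observe that a triangle with only one tree edge uses two $F$-edges, while a triangle with two tree edges uses exactly one $F$-edge; I would charge the former to either of its two $F$-edges and the latter to its unique $F$-edge, and then argue that each $f \in F$ receives at most two charges because its two endpoints have at most one common neighbor reachable by a single tree edge on each side — the forest structure forbids more. Verifying that this charging scheme is both exhaustive (every relevant triangle is counted) and injective enough to yield the factor $2$ is the delicate step.
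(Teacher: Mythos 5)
Your high-level plan (root the forest $T=G-F$, use parent pointers, and note that every triangle of the stated type contains a tree edge and hence at least one edge of $F$) is the same as the paper's, but both halves of your argument have genuine gaps. For the bound~$2|F|$: charging a one-tree-edge triangle ``to either of its two $F$-edges'' and claiming each $f=\{a,b\}\in F$ receives at most two charges because its endpoints ``have at most one common neighbor reachable by a single tree edge on each side'' is false. A vertex $a$ can have arbitrarily many children $c_1,\dots,c_r$ in the rooted forest, each joined to $b$ by an edge of $F$; then $a$ and $b$ have $r$ common neighbors reachable from $a$ by one tree edge, and $\{a,b\}$ lies in $r$ triangles of the stated type. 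The bound survives only because each such triangle $\{a,b,c_i\}$ also contains the $F$-edge $\{b,c_i\}$, whose endpoint $c_i$ has the third vertex $a=p(c_i)$ as its parent. The missing observation is: every triangle with at least one tree edge can be written as $\{x,y,p(x)\}$ for some $\{x,y\}\in F$. (Take a tree edge $\{v,p(v)\}$ of the triangle with third vertex $u$; if $\{u,v\}\in F$ set $(x,y)=(v,u)$; otherwise $\{u,v\}$ is a tree edge, so $v=p(u)$ since $u=p(v)$ is degenerate, and $\{u,p(p(u))\}$ must lie in $F$ lest the forest contain a $3$-cycle, so set $(x,y)=(u,p(p(u)))$.) Each $F$-edge $\{x,y\}$ then yields at most the two triangles $\{x,y,p(x)\}$ and $\{x,y,p(y)\}$, giving the bound $2|F|$.

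Second, your enumeration routine scans, for each tree edge $\{u,p(u)\}$, only the neighbors of $u$ in the forest against the marked $G$-neighborhood of $p(u)$. This finds only triangles with two tree edges: a triangle whose third vertex $w$ is joined to both $u$ and $p(u)$ by edges of $F$ is never reported, since $w$ is not adjacent to $u$ in $T$. The paper instead splits into two cases: triangles with two $F$-edges are found by, for each vertex $v$, marking the $F$-neighbors of $v$ and reporting $\{v,w,p(w)\}$ whenever both $w$ and $p(w)$ are marked; triangles with exactly one $F$-edge are found by checking, for each $\{u,v\}\in F$, the three configurations $p(u)=p(v)$, $p(p(u))=v$, and $u=p(p(v))$ in constant time. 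Alternatively, your scan becomes correct and stays within $O(n+m)$ if you scan the full $G$-neighborhood of each child $u$ against the marks of $N_G(p(u))$, grouping the tree edges by their parent endpoint so that each adjacency list is traversed $O(1)$ times; this reports each triangle once per tree edge it contains, which is easy to deduplicate.
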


\begin{proof}
Let~$G'=(V,E\setminus F)$ be an arbitrarily rooted forest and let~$p(v)$ denote the parent of a vertex~$v$ in it.
Note that~$F$,~$G'$, and~$p$ can be computed in~$O(n+m)$ time.
Observe further that all edges in~$E\setminus F$ are of the form~$\{v,p(v)\}$ for some vertex~$v$ and hence every triangle~$\{u,v,w\}$ in~$G$ where at least one of the edges between the three vertices is not in~$F$ is of the form~$\{u,v,p(v)\}$ for some vertices~$u,v$.
Note that there are at most two such triangles per edge in~$F$.
We can list all such triangles in linear time by the following algorithm.
We first mark all edges in~$F$ so that we can check for each edge in constant time whether it is in~$F$ or in~$E\setminus F$.
We first iterate over all vertices~$v\in V$ and find all triangles where exactly two edges between the three vertices are in~$F$ in overall linear time.
We iterate over all edges~${v,w}$ incident to~$v$ and if~$\{v,w\} \in F$, then we mark~$w$.
Afterwards, we iterate again over all neighbors~$w$ of~$v$ and if both~$w$ and~$p(w)$ are marked, then we list~$\{v,w,p(w)\}$ as a triangle.
In a third iteration we remove all markings from the neighbors of~$v$.
It remains to list all triangles with exactly one edge in~$F$.
To this end, we iterate over all edges~$\{u,v\} \in F$ and check whether~$p(u) = p(v)$,~$p(p(u)) = v$, or~$u = p(p(v))$ in constant time.
The algorithm takes linear time as~$O(\sum_{v\in V} \deg v) = O(m)$.

Assume towards a contradiction that there is a triangle~$\{x,y,z\}$ in~$G$ which is not listed by the described algorithm and where at least one edge between these three vertices is not in~$F$.
Without loss of generality let~$\{x,y\}\in E\setminus F$.
Since~$G'$ is a rooted forest, either~$x$ is the parent of~$y$ or~$y$ is the parent of~$x$.
Let without loss of generality be~$y=p(x)$.
Since~$\{x,y,z\}$ is a triangle, it holds that~$\{y,z\}\in E$.
If~$\{y,z\} \in F$, then~$\{x,y,z\}$ is listed when choosing~$v = x$ in the algorithm above.
By construction, both~$y$ and~$z$ are marked in the first iteration and then~$\{x,y,z\}$ is listed in the second iteration.
If~$\{y,z\} \in E\setminus F$, then~ either~$y = p(z)$ or~$p(y) = z$.
In the former case it holds that~$p(x) = y = p(z)$ and in the latter case it holds that~$z = p(y) = p(p(x))$ and $\{x,y,z\}$ is therefore listed.
\end{proof}

Using this we can easily show the following enum-advice kernel.

\begin{proposition}
	\label{fesnkernel}
	\trenum{} parameterized by feedback edge number~$k$ 
	admits a constant-delay enum-advice kernel with at most~$2k+3$ vertices and~${k+3}$ edges.
	It can be computed in~$O(n+m)$ time. 
\end{proposition}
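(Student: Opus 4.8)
The plan is to construct an enum-advice kernel by exploiting the structure exposed by \Cref{lem:festri}. The forest $G' = (V, E \setminus F)$ contains no triangles on its own, so every triangle of $G$ must use at least one edge of the feedback edge set $F$. I would partition the triangles into two types: those using at least one \emph{non}-$F$ edge (handled by the advice) and those consisting of three $F$-edges (handled inside the kernel). First I would run the algorithm of \Cref{lem:festri} to enumerate, in $O(n+m)$ time, all triangles having at least one edge outside $F$; there are at most $2k$ of these, and I would store this list as the advice $A(x,k)$.

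Next I would build the kernel instance $I(x,k)$. The key observation is that a triangle using only edges of $F$ is entirely determined by the endpoints of $F$-edges. Let $W$ be the set of endpoints of the $k$ feedback edges, so $|W| \le 2k$. I would take the kernel graph to be $G[W]$ restricted to the edges of $F$ (plus possibly a constant-size gadget to preserve the yes/no answer for \condRef{K}{prop:eak-correct} when $G$ has no triangle at all, e.g.\ a small fixed triangle tagged so it contributes the empty set under $f$). This gives at most $2k$ vertices from $W$ together with a constant number of gadget vertices, matching the claimed bound of $2k+3$ vertices and $k+3$ edges. Since every all-$F$ triangle of $G$ has all three vertices in $W$ and all three edges in $F$, such triangles are in bijection with the triangles of this kernel, so \condRef{K}{prop:eak-small} and \condRef{K}{prop:eak-correct} hold, and the kernel is computable in $O(n+m)$ time since identifying $F$ and its endpoints is linear.

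For the enumeration map $f$ required by \condRef{K}{prop:eak-enum}, I would define $f(w, A(x,k))$ to output the triangle $w$ itself when $w$ is an all-$F$ triangle of the kernel, and to output (once, attached to a distinguished gadget solution) the entire precomputed advice list when $w$ is the gadget solution. This partitions $\sol(x,k)$ into the all-$F$ triangles (recovered from kernel solutions) and the at-most-$2k$ triangles with a non-$F$ edge (recovered from the advice), giving disjointness \condRef{K}{prop:eak-enum-disjoint} and completeness \condRef{K}{prop:eak-enum-all}. Because each kernel solution is mapped either to itself or to a precomputed list, $\mathcal{T}_f$ achieves constant delay, yielding the claimed constant-delay enum-advice kernel.

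The main obstacle I anticipate is bookkeeping the constant-additive slack in the vertex and edge counts: one must verify that gadget vertices and edges used to make \condRef{K}{prop:eak-correct} hold (so that the kernel is a yes-instance exactly when $G$ is) do not introduce spurious triangles that would violate disjointness or completeness, and that they fit within the $+3$ vertices and $+3$ edges budget. A clean alternative avoiding most gadgetry is to let the kernel be exactly the graph on the $F$-endpoints with edge set $F$ and route \emph{all} triangles through the advice when the kernel is triangle-free, which I would check carefully against the exact constants $2k+3$ and $k+3$ stated in the proposition.
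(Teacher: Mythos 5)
Your construction is essentially identical to the paper's: the kernel is the $F$-edges together with their endpoints plus one gadget triangle on three fresh vertices (added exactly when the advice is non-empty --- that is the condition you need, rather than ``when $G$ has no triangle at all''), the advice is the list of at most $2k$ triangles with a non-$F$ edge from \cref{lem:festri}, and $f$ sends the gadget triangle to the advice and every other kernel triangle to itself. This matches the paper's proof, including the $2k+3$ vertex and $k+3$ edge counts and the $O(n+m)$ construction time.
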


\begin{proof}
Construct an enum-advice kernel~\mbox{$(I=I(G,k), A=A(G,k))$} as follows. 
For every edge~$e \in F$ put~$e$ and both of its endpoints into the graph.
Compute the feedback edge number~$k'\leq k$ of~$G_I$ in linear time.
Compute all triangles in~$G$ with at least one edge in~$E\setminus F$ and set~$A$ to be the set of all triangles found.
If~$A \neq \emptyset$, then add one extra triangle~$\{x,y,z\}$ where~$x,y,z \notin V$.
\cref{lem:festri} shows that there are at most~$2k$ such triangles and that they can be computed in linear time.
Observe that each step can be done in~$O(n+m)$ time.
Set the function~$f(\{x,y,z\},A)=A$ and~$f(w,A)=\{w\}$ for each~$w\in \operatorname{Sol}(I), w\neq \{x,y,z\}$.

We prove that the algorithm fulfills conditions~\condRef{K}{prop:eak-small}--\condRef{K}{prop:eak-enum} of~\cref{def:eakernel}.
  
By construction, for each edge in~$F$ there are at most two vertices and one edge put into~$I$.
There is at most one extra triangle added with three vertices and three edges.
Thus, it holds that~$|G_I| \leq~3\cdot k+3$~\condRef{K}{prop:eak-small}.

Assume that there is a triangle~$T = \{v_x,v_y,v_z\}$ in~$G$.
It either contains at least one edge in~$E\setminus F$ or only edges in~$F$.
In the first case~$G_I$ contains the triangle~$\{x,y,z\}$ and in the second case~$G_I$ contains~$T$.
Analogously, if~$G_I$ contains a triangle~$T'$, then it is either~$\{x,y,z\}$ or not.
If it is, then, by construction,~$A\neq \emptyset$ and hence~$G$ contains a triangle in~$A$.
If it is not, then~$T'$ is also contained in~$G$.
Thus~$G$ contains a triangle if and only if~$G_I$ contains a triangle \condRef{K}{prop:eak-correct}.

It remains to discuss the properties \condRef{K}{prop:eak-enum} of function~$f$.

For~$p,q\in \sol(I)\setminus \{x,y,z\}$, if~$p \neq q$, then~$f(p,A) \cap f(q, A) = \{p\} \cap \{q\} =\emptyset$.
If~$p$ or~$q$ is~$\{x,y,z\}$ (without loss of generality~$p = \{x,y,z\}$), then~$f(p,A)$ only contains triangles with at least one edge not in~$F$ and~$f(q,A)=\{q\}$ contains only a triangle where all edges are in~$F$.
It follows that~$f(p,A) \cap f(q,A) = \emptyset$ \condRef{K}{prop:eak-enum-disjoint}. 

By construction and by \cref{lem:festri},~$f(\{x,y,z\},A)$ contains all triangles in~$G$ where at least one of the edges is not in~$F$.
Since all edges in~$F$ are included in~$G_I$, all other triangles are contained in~$G_I$ \condRef{K}{prop:eak-enum-all}.
  
It is easy to see that~$f$ can be computed in constant-delay time \condRef{K}{prop:eak-enum-algo} by either iterating over~$A$ or just forwarding~$w$.
\end{proof}

To the best of our knowledge, there is no known algorithm that solves \trenum{} in~$O(n+m)$ or constant-delay time.

A straight-forward application of \cref{lem:kernel-alg} combined with \cref{fesnkernel} yields the following.

\begin{theorem}
\label{cor:fesnalg}
\trenum{} parameterized by feedback edge number~$k$ can be solved in $O(k^{1.5} + n + m)$ time.
\end{theorem}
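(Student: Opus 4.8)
The plan is to combine the constant-delay enum-advice kernel from \cref{fesnkernel} with the generic solving lemma \cref{lem:kernel-alg}, so the real work is to identify the right constants $c$, $d$, $g$, $h$ and the solution counts $\hashs$, $\hashS$ and then read off the claimed running time. First I would instantiate the parameters of \cref{lem:kernel-alg}. By \cref{fesnkernel} the kernelization $\mathcal{R}$ runs in $O(n+m)$ time, so we may take $c=1$ in the sense that $\mathcal{R}$ runs in $O(|x|+k)$ time; the kernel $I$ has at most $2k+3$ vertices and $k+3$ edges, so $|I| \in O(k)$, giving $h(k) \in O(k)$; and the forwarding function $f$ has constant delay, so $d=0$ (equivalently $h(k)^d \in O(1)$). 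The number $\hashs$ of solutions in the kernel is the number of triangles in $G_I$, and $\hashS$ is the number of triangles in $G$ that $f$ recovers.

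The key step is bounding $g(h(k))$, the time to solve the unparameterized problem on the kernel. Here I would not use the trivial cubic bound but instead the $O(m^{1.5})$-time triangle listing algorithm of Itai and Rodeh~\cite{DBLP:journals/siamcomp/ItaiR78} applied to the kernel graph $G_I$. Since $G_I$ has $O(k)$ edges, listing all triangles in $G_I$ takes $O(k^{1.5})$ time, so $g(h(k)) \in O(k^{1.5})$. I would also need to control the number of kernel triangles: the kernel $G_I$ consists only of the $F$-edges together with at most one extra artificial triangle, and since $G_I$ has $O(k)$ edges it contains at most $O(k^{1.5})$ triangles, so $\hashs \in O(k^{1.5})$; with $d=0$ the term $(\hashs+\hashS)\cdot h(k)^d$ is just $O(\hashs + \hashS)$. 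The solutions $\hashS$ of the input are output directly during enumeration, and reporting them is subsumed into the time already accounted for the enum-advice recovery (note that $\hashS$ is not a separate additive cost beyond the output itself, and in any case is dominated once we fold it into the $n+m$ and $k^{1.5}$ terms, since the recovered triangles are exactly those listed in the advice $A$, of which there are at most $2k$ by \cref{lem:festri}).

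Plugging these into the bound $O((|x|+k)^c + g(h(k)) + (\hashs+\hashS)\cdot h(k)^d)$ from \cref{lem:kernel-alg} yields $O((n+m) + k^{1.5} + k^{1.5}) = O(k^{1.5} + n + m)$, which is exactly the claimed running time. I expect the main obstacle to be the bookkeeping around $\hashS$: \cref{lem:kernel-alg} as stated carries an additive $\hashS \cdot h(k)^d$ term, and one must argue that with constant-delay $f$ (so $d=0$) this reduces to $O(\hashS)$, and then that $\hashS$ is already absorbed — the triangles recovered from the advice number at most $2k \in O(k^{1.5})$ by \cref{lem:festri}, and the triangles forwarded one-to-one from the kernel are counted by $\hashs$. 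The subtlety is that \cref{lem:kernel-alg} bounds total running time rather than merely output-sensitive cost, so I would be careful to note that $\hashS$ here does not blow up the bound precisely because the advice-recovery map $f$ is either the identity (forwarding one kernel triangle) or expands the single artificial triangle into the $O(k)$ advice triangles, both of which are cheap. Once this accounting is made explicit, the theorem follows immediately.
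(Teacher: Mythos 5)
Your proposal is correct and follows essentially the same route as the paper: instantiate \cref{lem:kernel-alg} with the $O(k)$-size constant-delay enum-advice kernel of \cref{fesnkernel}, solve the kernel with the $O(m^{1.5})$-time algorithm of Itai and Rodeh, and bound both $\hashs$ and $\hashS$ by $O(k^{1.5})$ via the $m^{3/2}$ triangle-count bound together with the $2k$ advice triangles from \cref{lem:festri}. Your accounting of the $\hashS$ term is in fact more explicit than the paper's own (which simply states that the number of triangles in both instances is $O(k^{3/2})$), but the argument is the same.
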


\begin{proof}
  By~\cref{fesnkernel}, \trenum{} parameterized by feedback edge number~$k$ admits a constant-delay enum-advice kernel with at most~\mbox{$k$} edges.
  It can be computed in~$O(n + m)$ time and the size of the advice is in~$O(k)$. 
  Itai and Rodeh~\cite[Theorem 3]{DBLP:journals/siamcomp/ItaiR78} showed that the number of triangles in a  graph is upper-bounded by~$m^{3/2}$ and that all triangles in a graph can be enumerated in~$O(m^{3/2})$ time.
  Let~$(G,k)$ be an input instance of \trenum{} parameterized by feedback edge number and let~$I(G,k)$ be the kernel instance of the described enum-advice kernelization.
  Since~$|I(G,k)| \in O(k)$, all solutions can be listed in $O(|I(G,k)|^{3/2})$ time and the number of triangles in both instances is in~$O(k^{3/2})$.
  The statement of the theorem then follows directly from \cref{lem:kernel-alg}.
\end{proof}

\subsubsection{Distance to~$d$-Degenerate Graphs plus Maximum Degree}
We next turn to the parameters distance to~$d$-degenerate graphs and maximum degree.
A graph is \emph{$d$-de\-ge\-ne\-rate} if each induced subgraph contains a vertex of degree at most~$d$.
The \emph{distance to~$d$-de\-ge\-ne\-rate graphs} of a graph~$G$ is the size of a minimum-cardinality vertex set~$D$ such that~$G - D$ is~$d$-degenerate. 
This parameter generalizes several well-known parameters like vertex cover (distance to 0-degenerate graphs) and feedback vertex set (distance to 1-degenerate graphs).
For any fixed~$d$ the distance to~$d$-degenerate graphs is~$\NP$-hard to compute~\cite{DBLP:journals/jcss/LewisY80}.
However, we can use existing linear-time constant-factor approximation algorithms for~$d=0$ and~$d=1$~\cite{BGNR98}.
Since a minimum feedback vertex set of a graph is always (and possibly much) smaller than its smallest vertex cover, it is natural to use this parameter rather than the vertex cover number if comparably good results can be shown for both parameters.

For larger values of~$d$, one can use heuristics to compute small sets~$D$ such that~$G - D$ is~$d$-degenerate.
Notably, the quality of the heuristic only affects the running time but not the solution quality of the subsequent parameterized algorithm. 

The distance to~$d$-degenerate graphs is usually small in many applications such as social networks as they contain only few vertices with high degree~\cite{DBLP:reference/snam/Ferrara14}.
Depending on the degree distribution at hand one can then choose the value of~$d$ that gives the best overall running-time. 
(The running time of the corresponding algorithms usually has some trade-off between~$d$ and the distance to~$d$-degenerate graphs.)

Green and Bader~\cite{DBLP:conf/socialcom/GreenB13} stated that \textsc{Triangle Counting} parameterized by the size of a vertex cover~$V'$ and the maximum degree~$d_{\max} = \max(\{\deg(v) \mid v \in V'\})$ of vertices in this vertex cover can be solved in~$O(|V'| \cdot d_{\max}^2 + n)$ time. 
We will construct an algorithm which solves \trenum{} parameterized by the distance to~$d$-degenerate graphs, provided that the respective set is given. 
Let~$D$ be set of vertices such that~$G-D$ is $d$-degenerate, where~$G$ is the input graph.
Let~$\Delta_D$ be the maximum degree of a vertex in~$D$ with respect to~$G$.
Our algorithm takes~$O(|D| \cdot \Delta_D^2 + n \cdot d^2)$ time. 
Note that for each vertex cover~$V'$ it holds that~$G - V'$ is~$0$\nobreakdash-de\-gen\-er\-ate. 
Hence applying our algorithm with~$d=0$ yields a running time of~$O(|D| \cdot \Delta_D^2 + n \cdot d^2) = O(|D| \cdot \Delta_D^2 + n)$. 
This matches the running time of Green and Bader's algorithm.
Consequently, our result generalizes the result by Green and Bader.
\begin{proposition}
	\label{Bader}
	\trenum{} parameterized by distance to $d$-de\-gen\-er\-ate graphs and maximum degree~$\Delta_D$ in a set~$D$ such that~$G-D$ is~$d$\nobreakdash-de\-gen\-er\-ate is solvable in~${O(|D|\cdot\Delta_D^2+n\cdot d^2)}$ time provided that the set~$D$ is given.
\end{proposition}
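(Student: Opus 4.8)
The plan is to partition the triangles of~$G$ according to how many of their vertices lie in~$D$ and to enumerate each of the three resulting classes---zero, exactly one, or at least two vertices in~$D$---with a dedicated subroutine whose cost stays within the claimed budget. As preprocessing I would mark the vertices of~$D$ by a Boolean array, extract the induced subgraph~$G-D$ together with a degeneracy ordering of it, and store for every vertex of~$G-D$ its list of at most~$d$ \emph{forward neighbours} (those appearing later in the ordering). Computing the ordering and these lists costs $O(n + m_{G-D}) = O(nd)$ time, since~$G-D$ is $d$-degenerate and hence has at most~$nd$ edges; reading~$G$ and isolating the $O(|D|\cdot\Delta_D)$ edges incident to~$D$ is likewise dominated by the enumeration cost below.

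Triangles with \emph{no} vertex in~$D$ lie entirely in~$G-D$, so I would list them by running the Chiba--Nishizeki algorithm~\cite{DBLP:journals/siamcomp/ChibaN85} on~$G-D$; as it runs in $O(m_{G-D}\cdot d)$ time and $m_{G-D}\le nd$, this costs $O(n\cdot d^2)$. Triangles with \emph{exactly one} vertex in~$D$ have the form~$\{v,a,b\}$ with~$v\in D$ and~$\{a,b\}$ an edge of~$G-D$ inside~$N(v)$; for each~$v\in D$ I would mark its at most~$\Delta_D$ non-$D$ neighbours, scan the at most~$d$ forward neighbours of each of them, and report a triangle whenever a forward neighbour is marked. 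This finds every such edge exactly once (from its earlier endpoint) in $O(|D|\cdot\Delta_D\cdot d)$ total time. Finally, every triangle with \emph{at least two} vertices in~$D$ contains an edge inside~$G[D]$; iterating over the at most~$O(|D|\cdot\Delta_D)$ such edges~$\{u,v\}$, marking~$N(u)$ (of size at most~$\Delta_D$ because~$u\in D$) and scanning~$N(v)$, I would output each common neighbour in $O(\Delta_D)$ time per edge, for $O(|D|\cdot\Delta_D^2)$ in total.

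Two points require care. First, unlike the vertex-cover special case (where~$G-D$ is edgeless, so every triangle meets~$D$ in at least two vertices and hence contains a~$D$-edge), triangles with a single vertex in~$D$ genuinely occur here and cannot be reached through any edge of~$G[D]$; this is precisely why the degeneracy ordering of~$G-D$ and its forward-neighbour lists are needed to keep the second subroutine within~$O(|D|\cdot\Delta_D\cdot d)$ time. Second---and this is the step I expect to be most delicate---each triangle must be emitted exactly once, yet a triangle with two or three vertices in~$D$ is reachable through more than one~$D$-edge. I would resolve this by an index-based canonicalisation: when an edge~$\{u,v\}\subseteq D$ yields a common neighbour~$b$, report the triangle if~$b\notin D$ (the two-in-$D$ case, reachable only through~$\{u,v\}$) or, when~$b\in D$, only if~$b$ carries the largest index among~$u,v,b$, so that each all-in-$D$ triangle is charged to its unique maximum-index vertex. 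The three classes are disjoint by construction, so no triangle is produced by two subroutines. Summing the three contributions with the preprocessing and absorbing the cross term via~$|D|\cdot\Delta_D\cdot d \le \tfrac12|D|\cdot\Delta_D^2 + \tfrac12 n\cdot d^2$ yields the claimed $O(|D|\cdot\Delta_D^2 + n\cdot d^2)$ bound, and correctness follows because every triangle falls into exactly one class and is enumerated precisely once there.
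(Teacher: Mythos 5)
Your proof is correct and reaches the claimed bound, and it shares with the paper the key first move: enumerate the triangles avoiding~$D$ by running Chiba--Nishizeki on the $d$-degenerate graph~$G-D$ in $O(n\cdot d^2)$ time. Where you diverge is in handling the triangles that meet~$D$. The paper treats them uniformly: it fixes one linear order on $V(G)$ in which all of~$D$ precedes $V(G-D)$, and for each $u\in D$ simply tests all $O(\Delta_D^2)$ pairs $v,w\in N(u)$ for adjacency, emitting $\{u,v,w\}$ only when $u<_a v<_a w$; since the $<_a$-minimum vertex of any such triangle necessarily lies in~$D$, each triangle is charged to exactly one iteration, and uniqueness comes for free from a single ordering with no case distinction. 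You instead split this class further into ``exactly one'' and ``at least two'' vertices in~$D$, handle the first via the degeneracy ordering of $G-D$ (forward-neighbour scans, $O(|D|\cdot\Delta_D\cdot d)$) and the second via the edges of $G[D]$ with an explicit max-index canonicalisation for all-$D$ triangles. Your route buys a slightly finer accounting --- the exactly-one case costs $|D|\cdot\Delta_D\cdot d$ rather than being absorbed into $|D|\cdot\Delta_D^2$ --- at the price of more bookkeeping for uniqueness; the paper's single-ordering trick is the more economical way to get the same $O(|D|\cdot\Delta_D^2+n\cdot d^2)$ bound. Both arguments are sound.
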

\begin{proof}
	Let~$D$ be a set such that~$G - D$ is~$d$-degenerate and let the maximum degree in~$D$ be~$\Delta_D = \max(\{\deg_G(v) \mid v \in D\})$.
	We show how to list all triangles in~$G$ in~${O(|D| \cdot \Delta_D^2+n\cdot d^2)}$ time in two steps.

	In the first step, list all triangles which do not contain any vertices in~$D$. 
	To this end, compute the~$d$-degenerate graph~$G' = G - D$ and list all triangles contained in~$G'$ in~$O(n \cdot d^2)$ time~\cite{DBLP:journals/siamcomp/ChibaN85}.

	In the second step, list all triangles with at least one vertex contained in~$D$. 
	To this end, in linear time fix an arbitrary strict order~$<_a$ on~$V(G)$ such that~\mbox{$v<_a w$} for all~$v\in D, w\in V(G')$. 
	For each~$u \in D$, iterate over all of the at most~$\Delta_D^2$ possible pairs of neighbors~$v,w\in N(u)$. 
	For each pair~$v,w\in N(u)$, check in constant time whether (i)~$\{v,w\}\in~E(G)$ and (ii)~$u<_a v<_a w$, and list the triangle if both conditions are met.
	Let~$\{x,y,z\}$ form a triangle in~$G$ with at least one vertex in~$D$, and without loss of generality let~$x<_a y <_a z$.
	Then~$x\in D$, and only in the iteration when~$x$ is chosen from~$D$ conditions~(i) and~(ii) are met and hence~$\{x,y,z\}$ is listed.
	It follows that in the second step all triangles with at least one vertex in~$D$ are enumerated exactly once in~$O(|D| \cdot \Delta_D^2)$~time.
\end{proof}

Using the above ideas, we also provide an enum-advice kernel for \trenum{} parameterized by distance to~$d$-degenerate graphs and the maximum degree in the deletion set~$D$. 

\begin{observation}
\label{obs:dmkernel}
	\trenum{} parameterized by distance to $d$-de\-gen\-er\-ate graphs and maximum degree~$\Delta_D$ in a set~$D$ such that~$G-D$ is~$d$\nobreakdash-de\-gen\-er\-ate admits a con\-stant-delay enum-advice kernel provided that the distance to $d$-degenerate graphs deletion set~$D$ is given. 
	The kernel is of size $O(|D| \cdot \Delta_D \cdot d)$ and can be computed in~$O(n\cdot d^2 + |D| \cdot \Delta_D)$~time. 
\end{observation}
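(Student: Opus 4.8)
The plan is to mirror the structure of the enum-advice kernel for the feedback edge number (\cref{fesnkernel}): I split the triangles of $G$ into those that avoid $D$ and those that meet $D$, push the former into the advice, and keep only the latter together with their supporting edges in the kernel graph. Throughout I reuse the two-phase viewpoint from the proof of \cref{Bader}.

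First I would set $G' = G - D$ and list all triangles of the $d$-degenerate graph $G'$ in $O(n\cdot d^2)$ time via the algorithm of Chiba and Nishizeki~\cite{DBLP:journals/siamcomp/ChibaN85}; these are exactly the triangles of $G$ that avoid $D$, and I store them (together with a marker for the set $D$) as the advice $A$. For the kernel graph $G_I$ I take the vertex set $D \cup N(D)$ and keep (i) every edge of $G$ incident with $D$ and (ii) every edge of $G'$ both of whose endpoints lie in $N(D)$. Since each vertex of $D$ has degree at most $\Delta_D$, there are at most $|D|\cdot\Delta_D$ vertices in $N(D)$ and at most $|D|\cdot\Delta_D$ edges of type~(i); and since $G'[N(D)]$ is $d$-degenerate there are at most $d\cdot|D|\cdot\Delta_D$ edges of type~(ii), giving the claimed size $O(|D|\cdot\Delta_D\cdot d)$. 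The type-(ii) edges can be collected by scanning the edges of $G'$ and testing endpoint membership in a precomputed Boolean table for $N(D)$; as $m = O(n\cdot d + |D|\cdot\Delta_D) \subseteq O(n\cdot d^2 + |D|\cdot\Delta_D)$ this stays within the stated running time. As in \cref{fesnkernel}, I add a fresh dummy triangle to $G_I$ exactly when $A \neq \emptyset$, which secures equivalence~\condRef{K}{prop:eak-correct}.

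The key step is defining the expansion function $f$ so that~\condRef{K}{prop:eak-enum} holds despite a subtlety: keeping the type-(ii) edges may recreate, inside $G_I$, triangles lying entirely in $V(G')$, which would then be produced twice (once as a triangle of $G_I$, once inside $A$). I therefore distinguish three kinds of $w \in \sol(I)$ using the $D$-marker in the advice: the dummy triangle maps to $f(\cdot, A) = A$; a triangle containing a vertex of $D$ maps to $\{w\}$; and a triangle lying entirely inside $V(G')$ maps to $\emptyset$. Every triangle of $G$ meeting $D$ is preserved in $G_I$ (all incident and neighbor-pair edges are kept) and is produced exactly once by the second rule; every triangle of $G$ avoiding $D$ lies in $A$ and is produced exactly once through the dummy; and the spurious all-$G'$ triangles of $G_I$ are suppressed. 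This yields disjointness~\condRef{K}{prop:eak-enum-disjoint} and completeness~\condRef{K}{prop:eak-enum-all}, and since $D$-membership is an $O(1)$ table lookup and $A$ can be streamed, $f$ is computable with constant delay~\condRef{K}{prop:eak-enum-algo}.

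The main obstacle is exactly this double counting of $G'$-internal triangles that survive in the kernel graph: the construction must retain enough edges among $N(D)$ to reconstruct every $D$-incident triangle, yet those same edges can manufacture genuine $G'$-triangles inside $G_I$. The correctness argument thus hinges on cleanly classifying kernel triangles by their intersection with $D$ and on checking that the $f = \emptyset$ branch never discards a triangle that is not already recovered through the advice, so that the union in~\condRef{K}{prop:eak-enum-all} equals $\sol(G)$ without repetition.
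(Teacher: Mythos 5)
Your proposal is correct and follows essentially the same route as the paper: run only the first phase of the algorithm behind \cref{Bader} to harvest the triangles of $G-D$ into the advice, and keep the subgraph on $D\cup N(D)$ as the kernel, with the same size and running-time accounting. The only (immaterial) difference is how double counting is handled: the paper filters the advice so that triangles all of whose vertices have a neighbor in $D$ are excluded (they survive as kernel solutions and map to themselves), whereas you keep the full advice and instead map the duplicate $D$-free kernel triangles to $\emptyset$ --- both are valid, and your write-up is in fact more explicit about the expansion function $f$, the dummy triangle, and the verification of \condRef{K}{prop:eak-enum} than the paper's terse proof.
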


\begin{proof}
	The main idea is to only compute the first step of the algorithm in the proof of \cref{Bader}. 
	Store in the advice all the triangles that are found in the process such that at least one vertex of the triangle has no neighbor in~$D$. 
	(In this way, we avoid double counting by having a triangle in the advice and in the kernel.)
	This first step takes~$O(n\cdot d^2)$ time. 
	Then, the kernel contains the subgraph induced by~$D$ and all neighbors of vertices in~$D$. 
	The kernel can be computed in~$O(n + m) \subseteq O(n\cdot d + |D| \cdot \Delta_D)$ time and the resulting graph contains at most~$|D| \cdot (\Delta_D + 1)$ vertices, at most~$|D| \cdot \Delta_D$ edges with at least one endpoint in~$D$, and at most~$|D| \cdot \Delta_D \cdot d$ edges with no endpoint in~$D$ since~$G-D$ is $d$-degenerate.
\end{proof}

\subsubsection{Distance to~$d$-Degenerate Graphs}
We next present an enum-advice kernel for \trenum{} parameterized by distance to~$d$-degenerate graphs. 
Recall that the distance to~$d$-de\-ge\-ne\-rate graphs of a graph~$G$ is the size of a minimum-cardinality vertex set~$D$ such that each induced subgraph of~$G - D$  contains a vertex of degree at most~$d$ (that is,~$G-D$ is~$d$-degenerate).
The ideas for the kernel regarding the distance to~$d$-degenerate graphs are a little bit different than the ones for \cref{obs:dmkernel}; we will, however, start similarly and enumerate all triangles in the~$d$-degenerate subgraph and store them in the advice.

\begin{theorem}
	\label{thm:dtddkernel}
	\trenum{} parameterized by distance to~$d$-de\-gen\-er\-ate graphs admits a con\-stant-delay enum-advice kernel provided that the distance deletion set~$D$ to $d$-degenerate graphs is given. 
	The kernel contains at most $|D| + 2^{|D|} +3$ vertices and can be computed \mbox{in~$O(n \cdot (d+1) \cdot (|D| + d))$}~time. 
\end{theorem}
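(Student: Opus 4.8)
The plan is to classify every triangle of $G$ by how it interacts with the deletion set $D$ and to handle the three resulting types by different mechanisms. Writing $\overline{D} = V \setminus D$, a triangle has either at least two vertices in $\overline{D}$ (equivalently, at least one edge inside $G - D$), or exactly one vertex in $\overline{D}$ together with two vertices in $D$, or all three vertices in $D$. Triangles of the first type will be pushed into the advice, while triangles of the last two types will be recovered from a small kernel graph.

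For the advice I would first list all triangles having at least one edge in $G - D$. Those lying entirely in $G - D$ can be enumerated in $O(n \cdot (d+1) \cdot d)$ time by the Chiba--Nishizeki algorithm~\cite{DBLP:journals/siamcomp/ChibaN85}, exactly as in the first step of the proof of \cref{Bader}. Those with exactly one vertex $u \in D$ arise from an edge $\{v,w\}$ of $G - D$ together with a common neighbor $u \in N(v) \cap N(w) \cap D$; since $G - D$ is $d$-degenerate it has at most $d \cdot n$ edges, and intersecting the two $D$-neighborhoods for each such edge costs $O(|D|)$, giving $O(n \cdot (d+1) \cdot |D|)$ time in total. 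All these triangles go into the advice. As in \cref{fesnkernel} I would then add a single dummy triangle $\{x,y,z\}$ on three fresh vertices to the kernel whenever this collection is nonempty, so that its presence signals, via condition \condRef{K}{prop:eak-correct}, that $G$ contains a triangle.

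The kernel graph $I$ keeps all of $D$ together with the edges inside $D$, and additionally partitions $\overline{D}$ into classes according to the trace $N(v) \cap D$; there are at most $2^{|D|}$ such classes, and for each nonempty class I keep one representative vertex together with its edges to $D$. Including the dummy triangle this yields at most $|D| + 2^{|D|} + 3$ vertices, and the whole construction stays within the claimed time bound since computing all traces costs only $O(n + m) \subseteq O(n \cdot (d+1) \cdot (|D| + d))$. The advice additionally stores, for each class, the list of all original $\overline{D}$-vertices belonging to it. The expansion function $f$ then sends the dummy triangle to the stored list of first-type triangles, sends a kernel triangle lying entirely in $D$ to itself, and sends a kernel triangle $\{a,b,v'\}$ with $a,b \in D$ and $v'$ a class representative to the set $\{\{a,b,v\} : v \text{ in the class of } v'\}$; each such $f(\cdot, A)$ is produced with constant delay by scanning the relevant stored list.

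It remains to verify the three defining properties of \cref{def:eakernel}. Property \condRef{K}{prop:eak-small} is immediate from the vertex count. For \condRef{K}{prop:eak-correct} one checks that $I$ has a triangle exactly when $G$ does, using that representatives are pairwise nonadjacent in $I$, so every non-dummy triangle of $I$ uses at most one representative and hence corresponds to a genuine triangle of $G$. The main point is property \condRef{K}{prop:eak-enum}: the three triangle types are genuinely disjoint (a first-type triangle has an edge in $G-D$, the other two types have none), which yields \condRef{K}{prop:eak-enum-disjoint}, and the fact that a representative $v'$ with $N(v') \cap D = S$ reproduces precisely the $D$-edges that any other member of its class reproduces is what makes $f$ surjective onto $\sol(G)$ for \condRef{K}{prop:eak-enum-all}. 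The step I expect to require the most care is exactly this last correspondence: arguing that keeping a single representative per trace loses no triangle of the second type, and simultaneously that the class lists in the advice let $f$ blow each kernel representative back up to all of its originals without ever producing a triangle twice.
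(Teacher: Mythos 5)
Your proposal is correct and follows essentially the same route as the paper's proof: enumerate the triangles with at most one vertex in $D$ into the advice (via Chiba--Nishizeki plus a scan of $D$-neighborhoods), signal their existence with a dummy triangle, and shrink $V\setminus D$ to one representative per $D$-trace class (the paper computes these classes as modules via partition refinement after deleting all edges avoiding $D$), with the same expansion function $f$ and the same disjointness/coverage arguments.
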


\begin{proof}
Let~$G$ be an instance of \trenum{} and let~\mbox{$k=|D|$} be the parameter. 
Construct the enum-advice kernel~$(I(G,k), A(G,k))$ as follows. 
To this end, we call~${(G_I,k')\defeq I(G,k)}$ and~$A\defeq A(G,k)$.

First, compute the~$d$-degenerate graph~$G' = G - D$.
The graph~$G'$ contains exactly those triangles in~$G$ that do not contain any vertices in~$D$. 
  Using a result of Chiba and Nishizeki~\cite[Theorem 1]{DBLP:journals/siamcomp/ChibaN85}, compute the set of triangles in~$G'$ in~$O(m \cdot d)$ time.
Next, compute all triangles with exactly one vertex in~$D$. 
To this end, compute the degeneracy ordering in linear time~\cite{MatulaB83}, iterate over all~\mbox{$v \in D$,~$u \in N(v) \setminus D$}, and the at most~$d$ neighbors of~$u$ that are ordered after~$u$ in the degeneracy order, and list all triangles found.
By this, all triangles in~$G$ containing exactly one vertex in~$D$ are found 
in~$O(k \cdot n \cdot d)$ time.
Altogether, we can compute the set~$T_1$ of all triangles in~$G$ with at most one vertex 
in~$D$ in~$O(n \cdot d \cdot (k + d))$~time. 

Delete all edges which have no endpoint in~$D$ as they cannot be part of any further 
triangles.
Next, compute all modules in the current graph, that is, a partition~$\mathcal{P}$ of the vertices according to their neighbors, using partition refinement in~$O(n+m)$~time~\cite{DBLP:conf/stacs/HabibPV98}.

For each non-empty part~$P \in \mathcal{P}$ pick one vertex~$v_P \in P$ and store 
a function~$M$ such that~$M(v_P) = P \setminus D$. 
Put all vertices in~$D$, all of the chosen vertices, and all edges induced by these 
vertices into~$G_I$. 
Add three new vertices~$a,b,c$ to~$G_I$ and if~${T_1 \neq \emptyset}$, then add 
three new edges~$\{a,b\},\{a,c\},\{b,c\}$. 
Note that all edges have an endpoint in~$D'= D \cup \{a,b\}$ and thus~$D'$ is a deletion set to~$d$-degenerate graphs for every~$d$.
Complete the construction by setting~$k'= |D'|$ and~$A = (T_1,M,\{a,b,c\})$.
Note that~$G_I$ contains at most~$k+2^{k}+3$ vertices~\condRef{K}{prop:eak-small}.
Observe that since~${m \in O(n \cdot (k + d))}$, the kernel can be constructed in~${O(n \cdot d \cdot (k + d))}$ time. 
For~$x_1,x_2,x_3\in V(G_I)$, define the function~$f$ as~$$f(\{x_1,x_2,x_3\},A) \defeq \begin{cases} T_1  \text{, if } \{x_1,x_2,x_3\} = \{a,b,c\} \text{, and otherwise} \\ \{\{v_1,v_2,v_3\} \mid v_1 \in M(x_1) \land v_2 \in M(x_2) \land v_3 \in M(x_3)\}. \end{cases}$$
Next, we prove that the algorithm fulfills all conditions of \cref{def:eakernel}.

Observe that~$G_I$ is isomorphic to a subgraph of~$G$ and, hence, if there is a 
triangle~$G_I$, then there is a triangle in~$G$.
Assume that there is a triangle~$X$ with vertices~$\{x_1,x_2,x_3\}$ in~$G$.
If~$X$ contains at most one vertex in~$D$, then~$T_1 \neq \emptyset$ and thus 
there is the triangle formed by~$\{a,b,c\}$ in~$G_I$. 
Otherwise,~$X$ contains at least two vertices in~$D$. Assume without loss of 
generality that~$x_2, x_3 \in D$. 
If~$x_1$ is in~$D$, then~$X$ is also contained in~$G_I$. 
Otherwise, there is a vertex~$v$ in~$G_I$ such that~$x_1 \in M(v)$. 
Since~$\{x_1,x_2,x_3\}$ forms a triangle in~$G$, it follows that~$\{v,x_2,x_3\}$ forms 
a triangle in~$G$ and~$G_I$.
Hence, condition~\condRef{K}{prop:eak-correct} (of \cref{def:eakernel}) is fulfilled.

Next we discuss the condition~\condRef{K}{prop:eak-enum}.
We will prove that for each triangle~${X = \{x_1,x_2,x_3\}}$ in~$G$ there is a unique solution~$w \in \sol(G_I,k')$ such that~$X \in f(w, A)$ \condRef{K}{prop:eak-enum-all}. 
If~$X$ contains at most one vertex in~$D$, then by construction~$X \in f(\{a,b,c\}, A)$. 
Since~$G_I$ contains only edges with an endpoint in~$D$, no triangle~$\{v_1, v_2, \allowbreak v_3\}$ where~$v_1 \in M(x_1)$,~$v_2 \in M(x_2)$, and~${v_3 \in M(x_3)}$ is contained in~$G_I$. 
Thus,~$\{a,b,c\}$ is the only triangle~$T$ such that~${X \in f(T, A)}$.
If~$X$ contains at least two vertices~$x_2,x_3\in D$, then there exists a vertex~$v$ in~$G_I$ such that~$x_1 \in M(v)$ and the triangle~$\{v,x_2,x_3\}$ is contained in~$G$.
By construction, the triangle~$\{v,x_2,x_3\}$ is also contained in~$G_I$
and~$X \in f(\{v,x_2,x_3\}, A)$. 
Since~$X \notin T_1$, it follows~$X \notin f(\{a,b,c\}, A)$. 

Next we show that for any two triangles~$p = \{u_1,u_2,u_3\}$ and~$q = \{v_1,v_2,v_3\}$ 
in~$G_I$, it holds that~$f(p,A)\cap f(q,A)=\emptyset$ \condRef{K}{prop:eak-enum-disjoint}.

If either~$p$ or~$q$ is~$\{a,b,c\}$ (without loss of generality~$p = \{a,b,c\}$), 
then by definition~$f(p,A)$ only contains triangles with at most one vertex in~$D$ 
and~$f(q, A)$ only contains triangles with at least two vertices in~$D$ and 
thus ${f(p,A) \cap f(q,A) = \emptyset}$.

If neither~$p$ nor~$q$ is~$\{a,b,c\}$, then both of them only contain vertices 
from the original graph~$G$. 
As~$p\neq q$, assume without loss of generality that~$u_1 \notin q$ and~\mbox{$v_1 \notin p$}. 
By construction all triangles in~$f(p, A)$ contain one vertex in~$M(u_1)$ and all 
triangles in~$f(q, A)$ contain one vertex in~$M(v_1)$. 
As shown above,~$M(u_1)$ ($M(v_1)$, respectively) only contains~$u_1$~($v_1$) and vertices that 
have the same neighbors as~$u_1$~($v_1$) in~$D$. 
Hence, no triangle in~$f(p, A)$~($f(q, A)$ respectively) contains a vertex 
in~$M(v_1)$ ($M(u_1)$) and thus~$f(p, A) \cap f(q, A) = \emptyset$.

Each triangle in~$\{\{v_1,v_2,v_3\} \mid v_1 \in M(x_1) \land v_2 \in M(x_2) \land v_3 \in M(x_3)\}$ and in~$T_1$ can be returned with constant delay between generating two successive solutions~\condRef{K}{prop:eak-enum-algo}.

Overall, the time needed to compute the kernel~$(I(G,|D|),A(G,|D|))$ is upper-bounded by~$O(n \cdot d \cdot (|D| + d) + |D| + m) = O(n \cdot (d + 1) \cdot (|D| + d))$. 
The equality holds since~$m \in O(n \cdot (|D| + d))$.
\end{proof}

To the best of our knowledge, there is no algorithm that solves \trenum{} parameterized by distance to~$d$-degenerate graphs within $O(n\cdot d^2 + |D| \cdot \Delta_D)$~time. 
All solutions can be reconstructed in constant-delay time and there is no known algorithm that solves \trenum{} in constant-delay time (and it seems unlikely that such an algorithm exists).

Using \cref{lem:kernel-alg} and \cref{thm:dtddkernel} we get the following result.

\begin{corollary}
	\trenum{} parameterized by distance to~$d$-de\-gen\-er\-ate graphs is solvable in~$O(n \cdot (d+1) \cdot (|D| + d) + 2^{3|D|} + \hashT)$ time provided that the vertex-deletion set~$D$ to $d$-degenerate graphs is given.
	\label{cor:dtddalg}
\end{corollary}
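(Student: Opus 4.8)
The plan is to obtain the corollary as a direct composition of \cref{thm:dtddkernel} and \cref{lem:kernel-alg}, so the work is essentially bookkeeping. First I would set~$k = |D|$ and instantiate the quantities of \cref{lem:kernel-alg} with the kernel produced by \cref{thm:dtddkernel}. That kernelization~$\mathcal{R}$ runs in~$O(n \cdot (d+1) \cdot (|D| + d))$ time, which plays the role of the~$(|x|+k)^c$ term and which I would carry over as is rather than rewrite, and it produces a kernel on at most~$h(k) = |D| + 2^{|D|} + 3$ vertices. Because the reconstruction is performed by the constant-delay algorithm~$\mathcal{T}_f$ of \cref{thm:dtddkernel}, the delay-exponent appearing in \cref{lem:kernel-alg} is effectively zero, so the corresponding per-solution factor contributes only a constant.

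Next I would bound the two quantities that still depend on the kernel. For the time~$g$ needed to solve \trenum{} on the kernel I would invoke the trivial cubic triangle-listing algorithm, giving~$g(N) = O(N^3)$ on an~$N$-vertex graph; substituting~$N \le h(k) = O(2^{|D|})$ yields~$g(h(k)) = O\bigl((|D| + 2^{|D|} + 3)^3\bigr) = O(2^{3|D|})$. The same cubic estimate bounds the number~$\hashs$ of triangles in the kernel, so~$\hashs = O(2^{3|D|})$, whereas~$\hashS = \hashT$ counts the triangles of the input graph~$G$.

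Finally I would substitute everything into the running-time bound of \cref{lem:kernel-alg}, obtaining
\[
O\!\bigl(n \cdot (d+1) \cdot (|D| + d) + g(h(k)) + (\hashs + \hashS)\bigr) = O\!\bigl(n \cdot (d+1) \cdot (|D| + d) + 2^{3|D|} + \hashT\bigr),
\]
where the~$2^{3|D|}$ coming from~$g(h(k))$ absorbs the~$\hashs$ contribution. There is no genuine obstacle here; the only point deserving explicit mention is the notational clash between the degeneracy~$d$ and the delay-exponent also written~$d$ in \cref{lem:kernel-alg}. Since the kernel of \cref{thm:dtddkernel} has constant delay, this exponent is zero, and the corollary follows.
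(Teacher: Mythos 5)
Your proposal is correct and follows essentially the same route as the paper: compose the enum-advice kernel of \cref{thm:dtddkernel} with \cref{lem:kernel-alg}, bound the kernel-solving term by~$O(2^{3|D|})$ (the paper uses the Itai--Rodeh~$O(m^{3/2})$ bound on a kernel of size~$O(2^{2|D|})$ where you use the trivial cubic algorithm on~$O(2^{|D|})$ vertices, yielding the same bound), and use constant delay to make the per-solution factor constant. The only cosmetic difference is that you absorb~$\hashs$ into the~$2^{3|D|}$ term whereas the paper bounds it by~$\hashT$; both are valid.
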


\begin{proof}
By \cref{thm:dtddkernel}, \trenum{} parameterized by the distance
to~$d$-de\-ge\-ne\-rate graphs (provided that the set~$D$ such that~$G-D$ is~$d$-de\-ge\-ne\-rate is given)
admits a constant-delay enum-advice kernel with size~$O(2^{2|D|})$ that can be computed 
in~$O(n \cdot (d+1) \cdot (|D|+d))$ time. 
Hence, all triangles in the kernel instance can be found in~$O(2^{3|D|})$ time \cite{DBLP:journals/siamcomp/ItaiR78}. Since the delay is constant and the number of triangles in both graphs is at most~$\hashT$, we can compute all triangles in the original instance from all solutions in the kernel in~$O(\hashT)$ time. Thus, by \cref{lem:kernel-alg}, \trenum{} is solvable in~${O(n \cdot (d+1) \cdot (|D| + d) + 2^{3|D|} + \hashT)}$ time parameterized by distance to~$d$-degenerate graphs assuming that the set~$D$ is given.
\end{proof}

\subsection{Parameters Incomparable with Degeneracy}
\label{sec:unrelatedtodeg}
In this section we present results on parameters that are unrelated to the degeneracy.
Again, we first describe the parameters and then turn to our results.

In \cref{sec:distance}, we consider the vertex-deletion distance to cographs, bipartite, or chordal graphs.
A graph is bipartite if its set of vertices can be partitioned into two sets such that no edge in the graph has both endpoints in one of the sets. 
A graph is called chordal if it does not contain induced cycles of length at least four.  
A graph is called a cograph if it contains no induced path with four vertices~($P_4$).

We show below that enumerating all triangles is easy if the input graph falls into one of the three graph classes. 
Thus, the three parameters mentioned above are natural candidates for a ``distance-to-triviality'' approach~\cite{GHN04}. 
Furthermore, all three parameters are upper-bounded by the vertex cover number. The vertex cover number allows for tractability results (see \cref{sec:abovedeg}).
Thus, aiming at generalizing the tractability result, we arrive at the study of these parameters.
Moreover, distance to bipartite graphs and distance to cographs are lower-bounded by parameters for which we know intractability, see \cref{fig:paramh}.
Thus, we also investigate the limits of how far we can generalize the tractability results.

Distance to cographs lower-bounds the cluster vertex number---a parameter advocated by Doucha and Kratochv{\'{\i}}l~\cite{DK12} by proving that several basic graph-theoretic problems parameterized by it are fixed-parameter tractable (note that it lies between the vertex cover number and clique-width).
Moreover, given a graph~$G$, we can determine in linear time whether~$G$ is a cograph and, if this is not the case, return an induced~$P_4$~\cite{journals/BretscherCHP08,DBLP:journals/siamcomp/CorneilPS85}. 
This implies that in~$O(k\cdot (m+n))$ time, with~$k$ being the distance to cographs, we can compute a set~$K\subseteq V$ of size at most~$4k$ such that~$G-K$ is a cograph.
However, we are not aware of (parameterized) linear-time constant-factor approximation algorithms for distance to bipartite or distance to chordal.

In \cref{sec:cliquewidth}, we consider the parameter clique-width.
Since treewidth is lower-bounded by degeneracy, we know that there is an $O(\tau\cdot m)$-time algorithm for \trenum{}, where~$\tau$ is the treewidth of the input graph.
A parameter lower bounding tree\-width in the parameter hierarchy is clique-width~$k$ (it holds that~$k \le 2^{\tau+1}+1$ and~$k$ can be arbitrarily small compared to~$\tau$~\cite{CourcelleO00}).
Moreover, clique-width also lower-bounds distance to cograph.
Thus, we study clique-width as it lies on the ``border to tractability'' of \trenum{}.

\subsubsection{Distance to Bipartite Graphs, Chordal Graphs, or Cograph}
\label{sec:distance}
We give linear-time FPT algorithms for \trenum{} with respect to the distance to bipartite, distance to chordal, and distance to cographs, respectively.
Our main results in this section are summarized in the following.

\begin{theorem}
 \label{thm:distbipchorcog}
 \trenum{} is solvable in $O(n+m \log n \cdot |K|+M)$ time
 \begin{enumerate}[(a)]
  \item with~$M=0$, when parameterized by the distance~$k$ to bipartite graphs, provided that the deletion set is given;\label{thm:distbip}
  \item with $M=\hashT$, when parameterized by the distance~$k$ to chordal graphs, provided that the deletion set is given;\label{thm:distchor}
  \item with $M=\hashT$, when parameterized by the distance~$k$ to cographs.\label{thm:distcog}
 \end{enumerate}
\end{theorem}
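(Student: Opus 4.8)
The plan is to treat, uniformly across all three graph classes, the triangles that intersect the deletion set~$K$, and then to enumerate the triangles lying entirely inside the well-structured graph~$G-K$ by a method tailored to each class.

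First I would fix a total order~$<$ on~$V(G)$ in which every vertex of~$K$ precedes every vertex of~$V(G)\setminus K$. Since any triangle containing a vertex of~$K$ then has its~$<$-smallest vertex in~$K$, I can list all triangles meeting~$K$ by iterating over each~$v\in K$ and, for each edge~$\{a,b\}\in E(G)$, testing in~$O(\log n)$ time (binary search in the sorted adjacency list of~$v$) whether~$a,b\in N(v)$ and~$v<a$, $v<b$; the triangle~$\{v,a,b\}$ is then output exactly once, namely at its smallest vertex. After sorting the adjacency lists once, this costs~$O(n+m\log n\cdot|K|)$ and is completely independent of the class of~$G-K$.

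It remains to enumerate the triangles avoiding~$K$, i.e.\ those inside~$G-K$. For~\ref{thm:distbip}, $G-K$ is bipartite and hence triangle-free, so there is nothing to do and~$M=0$. For~\ref{thm:distchor}, I would compute a perfect elimination ordering of the chordal graph~$G-K$ in linear time; by definition the set~$C(v)$ of neighbors of~$v$ that come later in this order is a clique, so the triangles whose earliest vertex is~$v$ are \emph{exactly} the pairs inside~$C(v)$. Listing all such pairs over all~$v$ outputs every internal triangle once in~$O(n+m+\hashT)$ time, giving~$M=\hashT$. For~\ref{thm:distcog}, since~$K$ is not assumed to be given, I would first compute a cograph deletion set of size at most~$4k$ in~$O(k(n+m))$ time by repeatedly detecting an induced~$P_4$ and deleting its four vertices~\cite{journals/BretscherCHP08,DBLP:journals/siamcomp/CorneilPS85}, and then build the cotree of~$G-K$ in linear time and enumerate its triangles along the cotree.

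The main obstacle is precisely this cograph enumeration: I need it to be output-linear, i.e.\ to run in~$O(n+m+\hashT)$ time. The idea is to charge each internal triangle to the join node of the cotree that is the least common ancestor of its three vertices. At such a join node the three vertices either lie in three distinct children, in which case every such triple is automatically a triangle by the join, or two of them lie in a common child and form an edge while the third lies in a different child. Enumerating the first kind by looping over triples of children and the second kind by combining the internal edges of a child with the vertices of the remaining children produces each internal triangle exactly once. The delicate point, which I expect to be the crux of the argument, is to organize this traversal so that it is genuinely output-linear: no work should be wasted on join nodes or child-combinations that yield no triangle, and the internal edges of a child must not be reinspected more than a constant number of times across the ancestor join nodes. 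Establishing this bookkeeping gives~$M=\hashT$ for the cograph case and, together with the common phase, the overall running time~$O(n+m\log n\cdot|K|+M)$.
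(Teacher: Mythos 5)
Your proposal is correct and takes essentially the same route as the paper: the paper's \cref{Pi} handles the triangles meeting~$K$ exactly as you do (iterating over~$v\in K$ and all edges, with an ordering to output each such triangle only at one designated vertex of~$K$), and the per-class enumerations via triangle-freeness of bipartite graphs, a perfect elimination ordering for chordal graphs, and a cotree-based dynamic program for cographs coincide with \cref{bipartite}, \cref{chordal}, and \cref{ch}. The ``delicate point'' you flag for cographs is in fact immediate: in the binary cotree every cross pair at a join node is an edge, so \emph{every} combination of an edge of one child with a vertex of the other child yields a distinct triangle charged to that node, and keeping the vertex, edge, and triangle sets of each cotree node as pointer-linked lists (so children's lists are concatenated in~$O(1)$ time rather than copied) makes the traversal output-linear, giving~$M=\hashT$.
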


In order to prove~\cref{thm:distbipchorcog}, we provide a general lemma which can be used to solve~\trenum{} with a given (vertex) deletion set to some graph class~$\Pi$ if all triangles in~$\Pi$ can be enumerated efficiently. 

\begin{lemma}
	\label{Pi}
	Let~$\Pi$ be some graph class and let~$f_\Pi(n,m)$ be the time required to solve \trenum{} on graphs in~$\Pi$.
	Then,	\trenum{} with a given vertex-deletion set~$K$ to $\Pi$ is solvable in~$O(m \cdot |K| + n+ f_\Pi(n,m))$ time.
\end{lemma}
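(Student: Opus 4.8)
The plan is to partition the triangles of $G$ according to how many of their vertices lie in the deletion set $K$, and to handle the two resulting types separately. Write $G' = G - K$, which by assumption lies in $\Pi$ and can be computed in $O(n+m)$ time. Every triangle of $G$ that avoids $K$ is precisely a triangle of $G'$, so all such triangles can be listed by invoking the assumed \trenum{} algorithm for $\Pi$ on $G'$ in $f_\Pi(n,m)$ time. It then remains to enumerate, exactly once each, all triangles containing at least one vertex of $K$.

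For the second type I would fix an arbitrary ordering $K = \{u_1, \ldots, u_{|K|}\}$ and maintain a single boolean array of size $n$ (so that adjacency-to-$u_i$ tests cost $O(1)$). Processing the $u_i$ in order, for the current vertex $u_i$ I first mark all of its neighbors, then scan every edge $\{v,w\} \in E$ and output the triangle $\{u_i, v, w\}$ whenever both $v$ and $w$ are marked (hence adjacent to $u_i$) and neither $v$ nor $w$ lies in $\{u_1, \ldots, u_{i-1}\}$; afterwards I unmark the neighbors of $u_i$ so the array is reusable. Marking and unmarking cost $O(\deg(u_i))$ and the edge scan costs $O(m)$, so the whole second phase runs in $O(m \cdot |K|)$ time; together with the first phase and the $O(n)$ initialization this yields the claimed $O(m \cdot |K| + n + f_\Pi(n,m))$ bound.

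The crux of the argument is correctness of the second phase, namely that every triangle meeting $K$ is output exactly once. Each such triangle $T$ has a well-defined smallest-index vertex $u_i \in T \cap K$. When $u_i$ is processed, the edge of $T$ opposite $u_i$ has both endpoints adjacent to $u_i$ and, because $u_i$ is the minimum-index $K$-vertex of $T$, neither endpoint lies in $\{u_1, \ldots, u_{i-1}\}$; hence $T$ is listed. Conversely, $T$ is not listed while processing any other $u_j \in T \cap K$: there $u_i$ appears as an endpoint with $i < j$, so the order condition fails. Finally, $T$ is never produced via an edge incident to $u_i$, since $u_i$ is not marked as its own neighbor, so the two edges of $T$ through $u_i$ fail the test. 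The main obstacle to overcome is exactly this potential double-counting of triangles having two or three vertices in $K$; the index-ordering rule resolves it without any overhead beyond the $O(1)$ membership test in the already-processed prefix of $K$, and it also makes the two phases enumerate disjoint triangle sets whose union is all of $G$'s triangles.
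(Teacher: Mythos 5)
Your proposal is correct and follows essentially the same route as the paper: run the assumed $\Pi$-algorithm on $G-K$ for the triangles avoiding $K$, and for each $v\in K$ mark its neighbors and scan all edges, using a fixed order on $K$ to ensure each triangle meeting $K$ is reported only at its minimal $K$-vertex. The only (welcome) addition is your explicit remark that edges incident to the currently processed vertex cannot trigger a spurious output, a detail the paper leaves implicit.
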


\begin{proof} 
Let $K$ be a set of vertices such that $G' = G - K$ is a graph contained in~$\Pi$. 
By definition, all triangles within $G'$ can be listed in $O(f_\Pi(n,m))$ time. 
All triangles with at least one vertex in $K$ can be listed in $O(m \cdot |K| + n)$ time by the following algorithm. 
Read the whole input and fix an arbitrary linear order $\leq_a$ of the vertices in $K$ in~$O(n+m)$ time. 
Check for each edge~$\{u,w\}\in E(G)$ and each vertex~$v\in K$ whether~$\{u,v,w\}$ is a triangle and for all~$x\in \{u,w\} \cap K$ it holds that $v \leq_a x$.
This can be done for all edges and one vertex~$v\in K$ in~$O(m)$ time by first marking all neighbors of~$v$, then check for each edge whether both endpoints are marked and finally remove all markings (by again iterating over all neighbors of~$v$).
If both conditions hold, then list $\{u,v,w\}$ as a new triangle. 
We prove that this algorithm lists all triangles with at least one vertex in $K$ exactly once. 
Since~$v \in K$ holds, this algorithm does not list any triangles which do not contain vertices in $K$. 
Let $\{a,b,c\}$ be an arbitrary triangle and let $a$ be in $K$. 
This triangle is found at least once as $\{b,c\}\in E$ and $a \in K$ holds. 
If for all~$x \in \{b,c\}$ it holds that~$x\notin K$ or~$a \leq_a x$, then this triangle is listed in the iteration where~$v = a$ and~$\{u,w\} = \{b,c\}$. 
Otherwise, $b \leq_a a$ or~$c \leq_a a$ holds. 
Without loss of generality, let~$b$ be such that~$b\leq_a y$ with~$y\in \{a,b,c\}\cap K$.
Then $\{a,b,c\}$ is listed in the iteration where $v = b$ and $\{u,w\} = \{a,c\}$ holds. 
There are~$m \cdot |K|$ iterations and each iteration takes~$\log n$ time. 
Thus, \trenum{} parameterized by deletion set to $\Pi$ is solvable in $O(m \cdot |K| + n)$ time.
\end{proof} 

\cref{thm:distbipchorcog} follows immediately from applying~\cref{Pi} with $\Pi$ being the class of bipartite graphs, of chordal graphs, and of cographs.
To this end, in the remainder of this section, we provide the remaining requirements to apply \cref{Pi}, that is, we give the running times in which~\trenum{} is solvable on bipartite graphs (\cref{bipartite}), chordal graphs~(\cref{chordal}), and cographs~(\cref{ch}). 

\paragraph{Distance to Bipartite Graphs}
Since bipartite graphs do not contain cycles of odd length and thus are triangle-free, \trenum{} is solvable in constant time on bipartite graphs after reading the input. 

\begin{observation}
	\trenum{} is solvable in $O(n+m)$ time on bipartite graphs.
	\label{bipartite}
\end{observation}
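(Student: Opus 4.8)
The plan is to exploit the well-known structural fact that bipartite graphs are exactly the graphs with no cycles of odd length. Since a triangle is a cycle on three vertices, and three is odd, no bipartite graph can contain a triangle. Hence the set of triangles to be enumerated is empty, and the only work the algorithm must perform is to confirm this (implicitly) and terminate.

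Concretely, I would argue as follows. Let~$G=(V,E)$ be a bipartite graph with bipartition~$V = X \cup Y$. Any triangle~$\{u,v,w\}$ consists of three pairwise adjacent vertices; by the pigeonhole principle at least two of them, say~$u$ and~$v$, lie in the same part of the bipartition. But then the edge~$\{u,v\}$ has both endpoints in one part, contradicting that~$G$ is bipartite. Therefore~$G$ is triangle-free, and the correct output of \trenum{} on~$G$ is the empty list.

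For the running time, the enumeration algorithm simply reads the input graph and outputs nothing. Reading the adjacency structure of an~$n$-vertex,~$m$-edge graph takes~$O(n+m)$ time, after which the algorithm halts having produced the (empty) list of triangles. This yields the claimed~$O(n+m)$ bound; in fact, once the input is in memory the enumeration itself is constant-time, as noted in the preceding paragraph.

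There is essentially no obstacle here: the statement is an immediate consequence of the odd-cycle characterization of bipartiteness, and the only reason a bound of~$O(n+m)$ rather than~$O(1)$ appears is the unavoidable cost of reading the input. The value of the observation is not its difficulty but its role as the base case~$f_\Pi(n,m) = O(n+m)$ needed to instantiate \cref{Pi} with~$\Pi$ the class of bipartite graphs, thereby delivering part~\ref{thm:distbip} of \cref{thm:distbipchorcog}.
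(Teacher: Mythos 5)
Your argument is correct and matches the paper's justification exactly: the paper notes that bipartite graphs contain no odd cycles and hence no triangles, so the algorithm merely reads the input in~$O(n+m)$ time and outputs the empty list. Your additional pigeonhole elaboration and the remark about the role of the observation as the base case for \cref{Pi} are consistent with the paper's (one-sentence) reasoning.
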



\paragraph{Distance to Chordal Graphs}
For each chordal graph there is a perfect elimination order~$\leq_p$ of the vertices which can be computed in linear time. 
That is, for each vertex~$v$ all neighbors~$w$ of~$v$ with~$v \leq_p w$ form a clique.
We will use this to find all triangles in chordal graphs in~$O(\hashT{} +\, n + m)$ time and thus by \cref{Pi} solve \trenum{} given a deletion set~$K$ to chordal graphs in~$O(\hashT{} +\, n + m \log n \cdot |K|)$ time. Recall that~$\hashT{}$ is the number of triangles in~$G$. 
Note that a clique containing~$n$ vertices contains~$O(n^3)$~triangles and that graphs consisting of only one large clique are chordal. 
We therefore cannot avoid the term~$\hashT{}$ in the running time. 

\begin{proposition} \label{chordal}
	\trenum{} is solvable in~${O(\hashT{}+\,n+m)}$ time on chordal graphs.
	\end{proposition}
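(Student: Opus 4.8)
The plan is to exploit the perfect elimination ordering (PEO) $\leq_p$ guaranteed for chordal graphs, which---by the discussion preceding the statement---can be computed in $O(n+m)$ time. For a vertex $v$, write $N^+(v) = \{w \in N(v) : v \leq_p w\}$ for its set of \emph{later} neighbors; by the defining property of a PEO, $N^+(v)$ induces a clique. The key observation is therefore that every pair $\{w,w'\} \subseteq N^+(v)$ yields a triangle $\{v,w,w'\}$: the edges $\{v,w\}$ and $\{v,w'\}$ exist by definition of $N^+(v)$, and $\{w,w'\}$ is an edge because $N^+(v)$ is a clique. Accordingly, I would list, for each vertex $v$, exactly the triangles $\{v,w,w'\}$ arising from the pairs inside $N^+(v)$.

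Concretely, I would first compute $\leq_p$ and then, in a single pass over the edge set, build the forward adjacency lists $N^+(v)$ for all $v$ (orienting each edge towards its $\leq_p$-larger endpoint); this takes $O(n+m)$ time. Afterwards, for each $v$ I iterate over all $\binom{|N^+(v)|}{2}$ pairs in $N^+(v)$ and output the corresponding triangle. Crucially, \emph{no} adjacency tests are performed in this step---this is the whole point of routing the enumeration through the PEO.

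For correctness I would argue that each triangle is listed exactly once. Given a triangle $\{x,y,z\}$ with $x \leq_p y \leq_p z$, its $\leq_p$-smallest vertex is $x$, and both $y$ and $z$ lie in $N^+(x)$, so the triangle is output when processing $v=x$. It is not output at $y$ (since $x \notin N^+(y)$, the pair $\{x,z\}$ never arises there) nor at $z$, so nothing is produced twice; soundness is immediate from the clique property used above.

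The enumeration step runs in $O\!\left(n + \sum_{v} |N^+(v)|^2\right)$ time, and the step that needs the most care---indeed the only genuine obstacle---is bounding this by $O(\hashT + n)$. Since $N^+(v)$ is a clique, the number of triangles whose $\leq_p$-smallest vertex is $v$ equals $\binom{|N^+(v)|}{2}$, whence $\sum_v \binom{|N^+(v)|}{2} = \hashT$. For every $v$ with $|N^+(v)| \geq 2$ one has $|N^+(v)|^2 \leq 4\binom{|N^+(v)|}{2}$ (as $d(d-2)\geq 0$ for $d\geq 2$), while each $v$ with $|N^+(v)| \leq 1$ contributes at most $1$; summing gives $\sum_v |N^+(v)|^2 \leq 4\hashT + n$. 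Combining with the $O(n+m)$ preprocessing yields the claimed $O(\hashT + n + m)$ total. The temptation to test each candidate pair for adjacency must be resisted, as that would cost $\Theta(\deg^2)$ per vertex and destroy the bound; relying on the clique structure is exactly what lets the work be charged directly against the triangles that are output.
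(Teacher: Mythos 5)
Your proof is correct and follows essentially the same route as the paper's: compute a perfect elimination ordering and use the fact that the later neighbors of each vertex form a clique, so every pair of them yields a triangle charged directly to the output. Your write-up is in fact somewhat more explicit than the paper's about why the enumeration cost is bounded by $O(\hashT + n)$, but the underlying idea is identical.
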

\begin{proof}
Compute a perfect elimination ordering in~$O(n + m)$ time. 
Next, list all triangles containing the first vertex in this ordering and delete it afterwards.
Proceed in this manner until no vertex is left.

Let $v$ be the vertex at the first position in the perfect elimination ordering in some iteration.
Listing all triangles containing~$v$ can be done as follows. 
As~$v$ is the first vertex in the ordering, there are no vertices before~$v$ and, by the definition of a perfect elimination ordering,~$N(v)$ forms a clique. 
Hence,~$v$ combined with any two of its neighbors forms a triangle. 
Thus, we list all triangles of the form~$\{v,x,y\}$ with~$x \in N(v)$, $y \in N(v)$, and~$x \neq y$. 
Once all of these triangles are listed,~$v$ is not contained in any triangle not being listed and hence one can delete it.
\end{proof}

\paragraph{Distance to Cograph}
We now show how to enumerate all triangles in a cograph.
To this end, we need the following notation.
Every cograph has a binary cotree representation which can be computed in linear time~\cite{DBLP:journals/siamcomp/CorneilPS85}. 
A cotree is a rooted tree in which each leaf corresponds to a vertex in the cograph and each inner node either represents a disjoint union or a join of its children. 
A join of two graphs~$(V_1, E_1), (V_2, E_2)$ with $V_1 \cap V_2 = \emptyset$ is the graph~$(V_1 \cup V_2, E_1 \cup E_2 \cup \{\{x,y\}\mid x\in V_1 \land y \in V_2\})$. 
We will use these representations to find all triangles in cographs in~$O(\hashT{} + n + m)$ time, where~$\hashT{}$~is the number of triangles in $G$. Note that one can compute a set~$K$ of size at most~$4k$ such that~$G - K$ is a cograph, where~$k$ is the size of a minimum set~$K'$ such that~$G-K'$ is a cograph, in~$O(k \cdot (n+m))$~time.

\begin{proposition}
	\label{ch}
	\trenum{} is solvable in~$O(\hashT{} + n + m)$ time on cographs.
\end{proposition}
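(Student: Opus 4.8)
The plan is to work directly on the binary cotree of the input cograph, which (as recalled just above) can be computed in $O(n+m)$ time. I would root the cotree arbitrarily and classify every triangle of $G$ by the \emph{lowest common ancestor} (LCA) of its three leaves. Since the LCA is the deepest node whose subtree contains all three leaves, it cannot place all three in a single child, and because the cotree is binary it must split them $2$--$1$. If this LCA were a union node, the two sides would be mutually nonadjacent and no triangle could span them; hence the LCA of any triangle is necessarily a \emph{join} node. Writing a join node as $\operatorname{join}(A,B)$ with leaf sets $V_A,V_B$, the triangles whose LCA is this node are then exactly those formed by an edge inside $G[V_A]$ together with an apex $z\in V_B$, or symmetrically an edge inside $G[V_B]$ with an apex $z\in V_A$. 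These two cases are disjoint, and together with the triangles lying entirely inside $A$ or entirely inside $B$ (handled at deeper join nodes) they account for every triangle exactly once.

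I would process the cotree bottom-up and, at each join node, emit only the triangles described above, using two data-structural ideas to stay efficient. First, I would fix one DFS ordering of the leaves so that the leaves of every subtree form a contiguous interval; then the vertex set $V_t$ of any node $t$ is stored in $O(1)$ as its interval and scanned in time linear in $|V_t|$. Second, I would enumerate the edges of an induced subgraph as a disjoint union of \emph{rectangular blocks}: the cross edges created at a join node $u$ with children $u_1,u_2$ are precisely $V_{u_1}\times V_{u_2}$, so the edges of $G[V_t]$ are the union of these blocks over all join-node descendants $u$ of $t$. I would therefore maintain for each node a linked list of its join-node descendants, assembled while going up by $O(1)$ list concatenations (every inner node merges the children's lists, and a join node additionally appends itself). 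At a join node $\operatorname{join}(A,B)$ I would then emit, for each recorded block $V_{u_1}\times V_{u_2}$ on the $A$-side and each apex $z\in V_B$, the triangles $\{x,y,z\}$ with $x\in V_{u_1}$, $y\in V_{u_2}$, and symmetrically on the $B$-side. Each triangle is produced once: at its unique LCA, on the unique side carrying its two base vertices, and from the unique block determined by the LCA of those two base vertices.

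The main obstacle is the running-time bound $O(\hashT+n+m)$, and specifically removing any hidden dependence on the depth of the cotree. A naive recursion that rebuilds the edge list of each induced subgraph at every ancestor would charge each edge once per ancestor, inflating the bound to $\Theta(nm)$ on a path-like cotree. The interval encoding is what lets a vertex set be stored and scanned without copying, and the decisive amortization argument is that every block $V_{u_1}\times V_{u_2}$ and every apex set is nonempty, so each (block, apex) pair examined yields at least one emitted triangle; hence the time spent scanning the join-node lists and the interval ranges is dominated by the number of triangles actually output, giving $O(\hashT)$ for the emission phase. Building the cotree, the leaf intervals, the subtree sizes, and the join-node lists costs $O(n+m)$, so the total is $O(\hashT+n+m)$. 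The parts I would verify most carefully are exactly this amortization and the claim that no triangle is emitted twice, which follows from the uniqueness of the LCA, of the $2$--$1$ split side, and of the base-edge block.
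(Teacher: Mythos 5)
Your proposal is correct and follows essentially the same route as the paper: a bottom-up pass over the binary cotree that charges each triangle to its (necessarily join) LCA node as an edge on one side plus an apex on the other, with the enumeration cost amortized against the triangles emitted. The only difference is implementational—you represent each subtree's edge set implicitly as a disjoint union of rectangular blocks indexed by join-node descendants, whereas the paper materializes pointer-linked edge lists (each edge added once at its creating join node)—and both yield the same $O(\hashT+n+m)$ bound.
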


\begin{proof}
Consider a dynamic program which stores for each node $p$ in the cotree all vertices~$V(p)$, all edges~$E(p)$ and all triangles~$T(p)$ in the corresponding subgraph of $G$. This can be done as follows:

	Let $q_1,q_2$ be the children of an inner node $p$ in the cotree.
	\begin{itemize}
	\item A single leaf node has one vertex and no edges or triangles.
	\item A union node has vertices $V(q_1) \cup V(q_2)$, edges $E(q_1) \cup E(q_2)$, and triangles~\mbox{$T(q_1) \cup T(q_2)$}.
	\item A join node has
	\begin{align*}
	V(p) ={ }& V(q_1) \cup V(q_2),\\
	E(p) ={ }& E(q_1) \cup E(q_2) \cup \{\{x,y\} \mid x \in V(q_1) \land y \in V(q_2)\} \text{, and}\\
	T(p) ={ }& T(q_1) \cup T(q_2) \cup \{\{x,y,z\} \mid x \in V(q_1) \land \{y,z\} \in E(q_2)\} \cup\\&  \{\{x,y,z\} \mid x \in V(q_2) \land \{y,z\} \in E(q_1)\}.
	\end{align*}

	\end{itemize}
	That is, a join node contains all edges the two children contain and all possible edges between vertices of them. A join node contains all triangles its two child-nodes contain and one triangle for each edge $\{y,z\}$ of one of its children and a vertex $x$ of the other, because edges $\{x,y\}$ and $\{x,z\}$ are in $E$ and therefore $\{x,y,z\}$ is a triangle.

	We will first prove that all triangles are enumerated that way and afterwards we will analyze the running time of the dynamic program.

	Let $\{a,b,c\}$ be any triangle in the cograph. We will prove that there is at least one node $p$ in the cotree with $\{a,b,c\} \in T(p)$. As each inner node keeps the triangles from its children, it follows that $\{a,b,c\} \in T(r)$ when $r$ is the root node of the cotree. Let without loss of generality be $p$ the least common ancestor of $a,b,$ and $c$, and let $q_1, q_2$ be the two children of $p$. As neither $\{a,b,c\} \in V(q_1)$ nor $\{a,b,c\} \in V(q_2)$, let us assume without loss of generality that $a \in V(q_1)$ and $b,c\in V(q_2)$. It holds that $\{b,c\}\in E(q_2)$ because there is an edge between $b$ and $c$ and they are both descendants of $q_2$. The node $p$ has to be a join node as $\{a,b\}, \{a,c\}\in E$ and $p$ is by definition the least common ancestor. By definition it holds that $\{\{x,y,z\} \mid x \in V(q_1) \land \{y,z\} \in E(q_2)\} \subseteq T(p)$. It follows that~$\{a,b,c\} \in T(p)$. Note that $\{a,b,c\}$ is only computed once in the least common ancestor node $p$ and then passed to the parent node. Hence,~$T(r)$ only contains~$\{a,b,c\}$ once and thus listing all triangles in $T(r)$ solves \trenum{}. 

	We will now analyze the running time. There are $n$ leaf nodes in the cotree each of which require a constant amount of time to compute. There are at most $n-1$ union nodes each of which only require a constant amount of time as they only need to point on their children's values. There are at most $n-1$ join nodes. Each edge and triangle is only added once and all other values do not need to be recomputed. A pointer to the edges and triangles in the child nodes is enough and only requires a constant amount of time to be set. Altogether, the global running time of this algorithm is in $O(\hashT{} + n + m)$. 
%
\end{proof}
\subsubsection{Clique-width}
\label{sec:cliquewidth}
We next turn to the parameter clique-width as it is incomparable to the degeneracy and upper-bounded by two parameters allowing for linear-time FPT algorithms: the distance to cographs (\cref{ch}) and treewidth (as treewidth upper-bounds the degeneracy).

The clique-width of a graph~$G$ is the minimum number~$k$ such that~$G$ can be constructed using a~$k$-expression. 
A~$k$-expression consists of four operations which use~$k$ labels \cite{DBLP:journals/mst/CourcelleMR00}. The operations are the following.
\begin{itemize}
	\item Creating a new vertex with some label~$i$.
	\item Disjoint union of two labeled graphs.
	\item Edge insertion between every vertex with label~$i$ to every vertex with label~$j$ for some labels~$i \neq j$.
	\item Renaming of label~$i$ to~$j$ for some~$i,j$.
\end{itemize}

Let~$V_1 \subseteq V$ be a set of vertices. 
A~\emph{twin class} in~$V_1$ is a set~$V' \subseteq V_1$ of vertices such that every vertex in~$V \setminus V_1$ either has all vertices in~$V'$ as its neighbors or none of them.
The set~$V_1$ is called an~\emph{$\ell$-module} if it can be partitioned into at most~$\ell$ twin classes. 
Let~$B$ be a rooted full binary tree whose leaves are in bijection to the vertices in~$V$. 
For each inner node~$p$ in~$B$ let~$V_p \subseteq V$ be the set of all vertices in~$G$ whose corresponding nodes in~$B$ are in the induced subtree of~$B$ rooted at~$p$. 
If for each inner node~$p$ in~$B$ the set~$V_p$ is an~$\ell$-module, then~$B$ is an~\emph{$\ell$-module decomposition}.
We will use this~$\ell$-module decomposition to construct a dynamic program to solve \trenum{} parameterized by some~$k$-expression in~${O(n^2 + k^2 \cdot n + \hashT{})}$ time.

We leave open whether \trenum{} parameterized by clique-width~$k$ admits a linear FPT algorithm.
Our results suggest that the parameters clique-width and average degree form the border case between parameters admitting linear FPT algorithms and those that are GP-hard.

\begin{theorem}
	\trenum{} parameterized by clique-width is solvable in~$O(n^2 + n \cdot k^2 + \hashT{})$ time, provided that a~$k$-expression of the input graph is given.
\label{thm:cliquewidth}
\end{theorem}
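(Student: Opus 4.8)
The plan is to combine the $\ell$-module decomposition defined above with an output-sensitive bottom-up dynamic program that emits every triangle exactly once, namely at the least common ancestor (LCA) of its three vertices. First I would turn the given $k$-expression into a $k$-module decomposition $B$: the union operations of the expression form a rooted full binary tree on the $n$ vertices, and for an inner node $p$ with children $q_1,q_2$ the vertices of $V_p$ carrying a common label form a twin class of $V_p$ (same label implies identical treatment by all later operations, hence identical neighbourhood outside $V_p$), so $V_p$ is a $k$-module with at most $k$ twin classes. At the leaves we start with a single vertex, one singleton twin class, and no edges. Building $B$, the at most $k$ twin classes per node with their vertex lists, and an explicit materialisation of $E(G)$ can be done in $O(n^2)$ time.

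The structural observation driving the program is that the bipartite graph between $V_{q_1}$ and $V_{q_2}$ is a blow-up of a complete/empty pattern on the at most $k\times k$ pairs of twin classes of the two children. Indeed, if $u,u'$ lie in one twin class of $V_{q_1}$ and $v,v'$ in one twin class of $V_{q_2}$, then $u\sim v\iff u'\sim v$ (twins of $V_{q_1}$ agree on the outside vertex $v$) and $u\sim v\iff u\sim v'$ (twins of $V_{q_2}$ agree on $u$), so the whole block between the two classes is complete or empty. I would read this $k\times k$ pattern off the edge-insertion operations applied above the union at $p$ in the expression.

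The program maintains, for each node $q$, its twin classes with vertex lists and, for each of the at most $k^2$ ordered pairs of twin classes, the list of $G[V_q]$-edges between them. Processing $p$ bottom-up, I emit the triangles whose LCA is $p$: these are exactly the triangles in which one child contains two of the three vertices and the other child contains one. The two same-side vertices span an internal edge of the corresponding child (already stored), and the single opposite vertex must lie in a twin class that the pattern declares complete to both endpoint classes, which forces the two cross edges. Concretely, for every nonempty child edge list between classes $(C,C')$ and every twin class $D$ of the other child complete to both $C$ and $C'$, I output the product of that edge list with the vertices of $D$. Afterwards I merge the two child structures into that of $V_p$: child twin classes are mapped by the (coarsening) relabelling to twin classes of $V_p$ (a twin of $V_{q_i}$ cannot split, since its members still agree outside $V_p$), edge lists are concatenated accordingly, and the new cross edges — the complete bipartite blocks selected by the pattern — are materialised and appended. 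Since each edge of $G$ is created as a cross edge exactly once, at the LCA of its endpoints, the total materialisation cost is $O(m)\subseteq O(n^2)$, while the per-node relabelling and concatenation cost $O(k^2)$, i.e.\ $O(n\cdot k^2)$ overall.

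For correctness I would verify that every triangle is emitted once: its three vertices have a well-defined LCA $p$, they necessarily split two-to-one across $\{q_1,q_2\}$ (an all-in-one-child split would push the LCA strictly below $p$), and the program emits the triangle precisely in that configuration; conversely every emitted triple is a genuine triangle because the internal edge is real and the pattern guarantees both cross edges. The main obstacle is the time accounting of the emission step. Naively testing, for each of the $\le k^2$ edge-classes, all $\le k$ candidate opposite classes costs $\Theta(k^3)$ per node, which exceeds the budget once $k$ is large relative to $n^{1/3}$. The delicate point is therefore to charge this work either to the $O(\hashT)$ output — every class pair that is both edge-nonempty and pattern-compatible yields at least one triangle — or to the $O(n^2)$ term, by touching at node $p$ only the $O(|V_p|)$ vertices actually involved, using $\sum_p|V_p|\le n\cdot(\text{depth})\in O(n^2)$, and to arrange the compatibility tests so that the residual overhead stays within $O(k^2)$ per node. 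Assembling the three buckets — $O(n^2)$ for the decomposition and edge materialisation, $O(n\cdot k^2)$ for the bookkeeping, and $O(\hashT)$ for the emitted triangles — gives the claimed $O(n^2+n\cdot k^2+\hashT)$ bound.
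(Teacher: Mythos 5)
Your proposal follows essentially the same route as the paper's proof: the paper likewise converts the $k$-expression into a $k$-module decomposition with at most $k$ twin classes per node (citing Bui-Xuan et al.\ for the $O(n^2)$-time computation and for the fact that each twin class of a child is contained in a twin class of the parent), stores for every node the edge lists between pairs of its twin classes together with its triangle list, and emits each triangle exactly once at the least common ancestor of its three vertices by combining an internal edge of one child with a twin class of the sibling whose cross-blocks to both endpoint classes are nonempty (hence complete). The correctness argument via the two-to-one split at the LCA is also identical. The one point where you hesitate --- the $\Theta(k^3)$ per-node cost of scanning all triples of twin classes, most of which may contribute no triangle --- is a legitimate concern, but it is not one the paper resolves more carefully than you do: the published analysis simply asserts that, since each triangle is added only once, all triangle lists can be computed in $O(\hashT+n)$ time, without charging the iterations over triples for which the internal edge list or one of the cross-blocks is empty. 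So you have not missed an idea that the paper supplies; rather, you have put your finger on the step at which the paper's own accounting is thinnest, and your suggested remedies (charging successful triples to $\hashT$, restricting the scan to nonempty edge-class pairs, and absorbing residual work into the $O(n^2)$ or $O(n\cdot k^2)$ terms) are exactly the kind of bookkeeping one would need to make that step rigorous.
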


\begin{proof}
	Recall that~$\hashT{}$ is the number of triangles in~$G$. 
	Bui-Xuan et al.~\cite[Lemma 3.2]{DBLP:journals/ejc/Bui-XuanSTV13} proved the following:
	First, given a~$k$-expression tree~$B$ of~$G$ one can compute in overall~$O(n^2)$ time for every node~$u$ in~$B$ the partition of~$V_u$ into its twin classes~$Q_u(1),\ldots, Q_u(h_u)$ where~$V_u$ is the set of vertices corresponding to the leaves of the subtree of~$B$ rooted at~$u$. 
	Second, the maximum number~$h_u$ of twin classes for each node~$u$ in~$B$ is at most~$k$. 
	Third,~$B$ can be modified such that it becomes a full binary tree and thus combined with the twin classes becomes a~$k$-module decomposition of~$G$. 
	Fourth, the~$k$-module decomposition has only a single twin class in the root node and each twin class of a node~$u$ in~$B$ is fully contained in one of the twin classes of the parent~$v$ of~$u$. 
	We use these statements in our algorithm.
	
	We next describe which information is stored in our dynamic program.
	First, store the information that all vertices in~$Q_u(i)$ are contained in the twin class~$Q_v(b)$ by adding~$i$ to a set~$M_{u,b}$. 

	Next, for each node~$v$~in~$B$ store~$Q_v(1), \ldots, Q_v(h_v)$, the twin classes of~$v$ (which are already computed), the set~$E^v_{i,j}$ of all edges between vertices in twin classes~$Q_v(i)$ and~$Q_v(j)$~$(1\leq i,j \leq h_v)$, and the set~$LT^v$ of all triangles formed by vertices in~$V_v$. 
	Denote by~$F_{u,a,w,b}$ the set of all edges with one endpoint in~$Q_u(a)$ and one in~$Q_w(b)$ where~$u$ and~$w$ have the same parent in~$B$, formally,~$F_{u,a,w,b} = \{\{x,y\} \mid x \in Q_u(a) \land y\in Q_w(b)\} \cap E$.
	
	Note that by the definition of twin classes, either all vertices of the twin classes of two nodes with the same parent in~$B$ are pairwise connected or none of them are.

	The dynamic program is defined as follows. 
	A leaf node~$v$ in~$B$ has an empty set of triangles and only empty edge sets. 
	An inner node~$v$ with children~$u,w$ has for each~${1 \leq i,j \leq h_v}$ an edge set 
	\begin{align*}
		E^v_{i,j} = {} & \bigcup_{l\in M_{u,i}}\bigcup_{m\in M_{u,j}}E^u_{l,m} \cup \bigcup_{l\in M_{w,i}}\bigcup_{m\in M_{w,j}}E^w_{l,m} \cup \\
					{} & \bigcup_{m\in M_{u,i}} \bigcup_{l\in M_{w,j}} F_{u,m,w,l} \cup \bigcup_{m\in M_{u,j}} \bigcup_{l\in M_{w,i}} F_{u,m,w,l}
	\end{align*}
	and a set of triangles 
	$$LT^v = LT^u \cup LT^w \cup LT_{u,u,w} \cup LT_{w,w,u}$$ 
	where 
	\begin{multline*}
		LT_{x,x,y} = \bigcup_{o=1}^{h_x} \bigcup_{p=o}^{h_x} \bigcup_{q=1}^{h_y} \{\{a,b,c\}\mid a\in Q_x(o) \land b\in Q_x(p) \land c\in Q_y(q)~\land \\ 
			F_{x,o,y,q} \neq \emptyset \land F_{x,p,y,q} \neq \emptyset \land \{a,b\}\in E^x_{o,p}\}.
	\end{multline*}

	We next analyze the running time and then the correctness of the dynamic program.
	The table entries for each of the~$n$ leaves in~$B$ can be computed in constant time.
	For each of the~$n-1$ inner nodes, at most~$k^2$ sets of edges have to be computed, each of which is formed out of two parts: edges already stored in the children and edges between vertices of different children. 
	The former requires~$O(k^2)$ time per node as it can be seen as a list of pointers to the children's sets of edges. 
	The latter requires~$O(m)$ global time as each edge is only added once. 
	Hence, the edge sets of all nodes can be computed in~$O(n \cdot k^2 + m)$ time.
	The list of triangles is a list containing two pointers to its children's list of triangles and a third list of new triangles. 
	As each triangle is only added once, all lists can be computed in~$O(\hashT + n)$ time.

	We will prove that each triangle~$\{x,y,z\}$ is found in the least ancestor node~$p$ of~$x$,~$y$, and~$z$. 
	As each node in~$B$ references the triangles of its children,~$\{x,y,z\}$ is passed on to the root node in~$B$. 
	Note that each node in~$B$ only computes those new triangles which have one vertex in the subtree rooted in one child node and two vertices in the subtree rooted in the other child node. 
	Thus each triangle is computed at most once.
	Let~$q$ and~$r$ be the children of~$p$ and let without loss of generality be~\mbox{$x \in Q_q(s), y \in Q_q(t)$} and~$z \in Q_r(u)$. 
	Since~$\{x,z\},\{y,z\}\in E$, it holds that~$F_{q,s,r,u} \neq \emptyset$ and~$F_{q,t,r,u} \neq \emptyset$. 
	Moreover, it holds that~$\{x,y\}\in E^q_{s,t}$ because~$x \in Q_q(s), y \in Q_q(t)$, and~$\{x,y\}\in E$. 
	Hence,~$\{x,y,z\}$ is contained in
	\begin{align*}
	 \big\{\{a,b,c\}\mid a\in Q_q(s) &\land b\in Q_q(t) \land c\in Q_r(u) \land \\
				      &\land F_{q,s,r,u} \neq \emptyset \land F_{q,t,r,u} \neq \emptyset \land \{a,b\}\in E^q_{s,t}\big\},
	\end{align*}

%
	which is a subset of~$LT_{q,q,r}$.
	Since~$LT_{q,q,r}\subseteq LT^p$, $\{x,y,z\}$ is found in node~$p$. 
  \end{proof}

\section{Conclusion}
\label{sec:conclusion}
Employing the framework of FPT-in-P analysis \cite{GMN17}, we provided novel notions and insights concerning potentially faster algorithms for enumerating (and detecting) triangles in undirected graphs.
One the one hand, it remains to be seen whether General-Problem-hardness is an appropriate notion for intractability within the field of FPT~in~P.
On the other hand, so far there is still little work on kernelization in the context of enumeration problems;
we hope that the notion of enum-advice kernels can be used to further develop this area of research.

As previously observed by Ortmann and Brandes~\cite{OB14}, the parameterized algorithm of Chiba and Nishizeki~\cite{DBLP:journals/siamcomp/ChibaN85} despite its age is still very competitive.
Experiments revealed that their algorithm enumerated all triangles in social networks with several hundred thousand vertices within a few seconds.
Our algorithms for the parameter distance to~$d$-degenerate graphs had to perform less operations for small values of~$d$; we were, however, not able to beat the actual running time of the algorithm by Chiba and Nishizeki \cite{DBLP:journals/siamcomp/ChibaN85}.
An analysis revealed that their algorithm uses main memory of modern computers with their memory hierarchy more efficiently. In particular, our algorithm accesses data that is scattered throughout the memory.
We conclude that further practical improvement for \trenum{} should include aspects of memory-efficiency in the algorithm design process.
Independent to our work, a first step into this direction has already been done by~\citet{DBLP:journals/tods/HuTC14}.

It remains open to study whether our exponential-size kernel for the parameter ``distance to $d$-degenerate graphs'' (see \cref{thm:dtddkernel}) can be improved in terms of size and running time. 
On a more general scale, note that triangles are the smallest non-trivial cliques as well as cycles. 
Can one generalize our findings to these two different settings when increasing the subgraph size?
Finally, we mention that following the FPT-in-P route might be an attractive way to ``circumvent'' lower bound results for other polynomial-time solvable problems.
To this end, a systematic exploration of parameter spaces (cf.\ \citet{parahi} or \cref{fig:paramh}) and parameter combinations~\cite{Nie10} seems beneficial.

\paragraph{Acknowledgement}
We are grateful to Philipp Zschoche for providing the practical insights and preliminary experiments.
We thank Mark Ortmann for fruitful discussions on obstacles for practical algorithms.
Finally, we are grateful to two anonymous reviewers of \emph{Journal of Computer and System Sciences} whose constructive feedback helped to significantly improve the presentation.

\setlength{\bibsep}{0pt plus 0.1ex}
\bibliographystyle{plainnat}
\bibliography{mylib}

\end{document}